\newcommand*\circled[1]{\tikz[baseline=(char.base)]{
            \node[shape=circle,draw,inner sep=1pt] (char) {#1};}}
\begin{document}

\title{Enabling Seamless Data Security, Consensus, and Trading in Vehicular Networks}

\author{Emanuel~Vieira*,~\IEEEmembership{Member,~IEEE,}
        João~Almeida,~\IEEEmembership{Member,~IEEE,}
        Joaquim~Ferreira,~\IEEEmembership{Senior Member,~IEEE,}
        and~Paulo~C.~Bartolomeu,~\IEEEmembership{Senior Member,~IEEE}

\thanks{E. Vieira and P. C. Bartolomeu are with the Instituto de Telecomunicações and Departamento de Eletrónica, Telecomunicações e Informática of the Universidade de Aveiro, Aveiro, Portugal}
\thanks{J. Almeida is with the Instituto de Telecomunicações, Universidade de Aveiro, Aveiro, Portugal}
\thanks{J. Ferreira is with the Instituto de Telecomunicações and Escola Superior de Tecnologia e Gestão de Águeda of the Universidade de Aveiro, Aveiro, Portugal}
\thanks{*Corresponding author e-mail: vieira.e@ua.pt}
}

\markboth{Journal of \LaTeX\ Class Files,~Vol.~14, No.~8, August~2015}%
{Shell \MakeLowercase{\textit{et al.}}: Bare Demo of IEEEtran.cls for IEEE Journals}

\maketitle

\begin{abstract}
Cooperative driving is an emerging paradigm to enhance the safety and efficiency of autonomous vehicles. To ensure successful cooperation, road users must reach a consensus for making collective decisions, while recording vehicular data to analyze and address failures related to such agreements. This data has the potential to provide valuable insights into various vehicular events, while also potentially improving accountability measures.
Furthermore, vehicles may benefit from the ability to negotiate and trade services among themselves, adding value to the cooperative driving framework.
However, the majority of proposed systems aiming to ensure data security, consensus, or service trading, lack efficient and thoroughly validated mechanisms that consider the distinctive characteristics of vehicular networks.
These limitations are amplified by a dependency on the centralized support provided by the infrastructure.
Furthermore,
corresponding mechanisms must diligently address security concerns, especially regarding potential malicious or misbehaving nodes, while also considering inherent constraints of the wireless medium.
We introduce the \textit{Verifiable Event Extension} (VEE), an applicational extension designed for Intelligent Transportation System (ITS) messages. The VEE operates seamlessly with any existing standardized vehicular communications protocol, addressing crucial aspects of data security, consensus, and trading with minimal overhead. To achieve this, we employ blockchain techniques, Byzantine fault tolerance (BFT) consensus protocols, and cryptocurrency-based mechanics. To assess our proposal's feasibility and lightweight nature, we employed a hardware-in-the-loop setup for analysis. Experimental results demonstrate the viability and efficiency of the VEE extension in overcoming the challenges posed by the distributed and opportunistic nature of wireless vehicular communications.
\end{abstract}

\begin{IEEEkeywords}
vehicular communications, applicational extension, data security, blockchain, consensus, fault-tolerance, trading
\end{IEEEkeywords}

\IEEEpeerreviewmaketitle

\section{Introduction}

\begin{table}
\caption{List of acronyms.}
\label{tab:acronyms}
\centering
\begin{tabular}{ll}
\hline

\textbf{Acronym} & \textbf{Description} \\ 
\hline

\textbf{ASN.1} & Abstract Syntax Notation One \\ 
\textbf{BFT} & Byzantine Fault Tolerance \\
\textbf{BTP} & Basic Transport Protocol \\
\textbf{BSM} & Basic Safety Message \\ 
\textbf{CAM} & Cooperative Awareness Message \\ 
\textbf{CAV} & Connected Autonomous Vehicle \\ 
\textbf{CBR} & Channel Busy Ratio \\
\textbf{CCAM} & Cooperative, Connected and Automated Mobility \\
\textbf{CCH} & Control Channel \\ 
\textbf{CPM} & Collective Perception Message \\ 
\textbf{CPU} & Central Processing Unit \\
\textbf{C-V2X} & Cellular-V2X \\ 
\textbf{D2D} & Device-to-Device \\ 
\textbf{DAG} & Directed Acyclic Graph \\ 
\textbf{DCC} & Decentralized Congestion Control \\
\textbf{DENM} & Decentralized Environmental Notification Message \\ 
\textbf{DLT} & Distributed Ledger Technology \\ 
\textbf{eCDF} & Empirical Cumulative Distribution Function \\
\textbf{ECDSA} & Elliptic Curve Digital Signature Algorithm \\
\textbf{ECIES} & Elliptic Curve Integrated Encryption Scheme \\
\textbf{EDR} & Event Data Recorder \\
\textbf{ETC} & Electronic Toll Collection \\
\textbf{ETSI} & European Telecommunications Standards Institute \\

\textbf{EV} & Electric Vehicle \\

\textbf{GN} & GeoNetworking \\
\textbf{GNSS} & Global Navigation Satellite System \\

\textbf{HiL} & Hardware-in-the-Loop \\

\textbf{ID} & IDentifier \\
\textbf{IEEE} & Institute of Electrical and Electronics Engineers \\
\textbf{IoT} & Internet of Things \\
\textbf{IPC} & Inter-Process Communication \\
\textbf{ITS} & Intelligent Transportation Systems \\

\textbf{LLC} & Logical Link Control \\
\textbf{MAC} & Medium Access Control \\
\textbf{MCM} & Maneuver Coordination Message \\

\textbf{OBU} & On-Board Unit \\
\textbf{OEM} & Original Equipment Manufacturer \\
\textbf{OMNeT++} & Objective Modular Network Testbed in C++ \\

\textbf{PBFT} & Practical Byzantine Fault Tolerance \\
\textbf{PDR} & Packet Delivery Ratio \\
\textbf{PDU} & Packet Data Unit \\
\textbf{PKI} & Public Key Infrastructure \\
\textbf{RAM} & Random Access Memory \\
\textbf{RSU} & Roadside Unit \\
\textbf{RV} & Relevant Vehicle \\

\textbf{SAE} & Society of Automotive Engineers \\

\textbf{SAEM} & Service Advertisment Essential Message \\
\textbf{SCH} & Service Channel \\
\textbf{SP} & Sub-Protocol \\
\textbf{SUMO} & Simulation of Urban Mobility \\
\textbf{SV} & Subject Vehicle \\

\textbf{TAM} & Toll Advertisement Message \\
\textbf{TV} & Target Vehicle \\
\textbf{TGS} & Traffic Generation System \\
\textbf{TUM} & Toll Usage Message \\
\textbf{TUMack} & Toll Usage Message Acknowledgement \\

\textbf{UP} & Underlying Protocol \\

\textbf{VAS} & Value-added Service \\

\textbf{VCM} & Verifiable Cooperation Message \\
\textbf{VERCO} & VERifiable COoperation \\

\textbf{VEE} & Verifiable Event Extension \\
\textbf{VEP} & Verifiable Event Protocol \\
\textbf{VE} & Verifiable Event \\
\textbf{VRU} & Vulnerable Road User \\
\textbf{VsE} & Verifiable sub-Event \\

\textbf{V2X} & Vehicle-to-everything \\
\textbf{V2V} & Vehicle-to-vehicle \\
\textbf{V2I} & Vehicle-to-infrastructure \\

\textbf{WAVE} & Wireless Access in Vehicular Environments \\
\textbf{WGS} & World Geodetic System \\
\textbf{WSA} & WAVE Service Advertisement \\
\textbf{WSMP} & WAVE Short Message Protocol \\

\hline

\end{tabular}
\end{table}

Localized vehicular communications is a key technology in the Cooperative, Connected and Automated Mobility (CCAM) paradigm. Vehicle-to-everything (V2X) wireless technologies such as ITS-G5 (802.11p) and Cellular-V2X (C-V2X) allow for vehicles to communicate among themselves in a peer-to-peer and decentralized manner without the added latency of and reliance on the infrastructure, base stations, and the backhaul network. 
The utilization of these technologies is foreseen to enhance both traffic efficiency and safety while also facilitating the deployment of non-safety-critical applications like value-added services (VAS), such as trading mechanisms and payments among road users.
However, these types of communications have limitations inherited from their wireless nature. These systems usually operate within the 5.9 GHz range and demonstrate a practical operating range of up to several hundred meters. However, packet loss issues may arise beyond this range, exacerbated by obstacles such as buildings, environmental landscape features, and even vehicles, especially in high-density traffic scenarios.
Furthermore, trust among the different actors is a central issue in vehicular networks \cite{arif19}. 
While security standards mandate that the identities road users employ in the communications among themselves are typically authenticated using cryptographic signing algorithms based on public key infrastructures (PKIs), different types of faults may still occur.
The systems (i.e., the autonomous vehicles) generating the exchanged data rely on sensors risk malfunctioning, potentially introducing wrong information into the network. 
Moreover, the intrusion of malicious actors into the driving system can also take place, potentially enabled through the wireless communication links.

One way of addressing these limitations is by recording the data exchanged among vehicles in order to analyze what faults in normal driving operations led to any incidents, such as accidents.
Legal authorities and insurance companies can then use such recorded data for accountability purposes. Original equipment manufacturers (OEMs) responsible for producing autonomous vehicles can also take advantage of these data to fix any hardware or software issues. Although solutions such as the Event Data Recorder (EDR) currently exist and are already deployed in modern cars, they do not yet capture V2X data \cite{edr}. Determining the specific data that should be recorded can also pose a challenge due to the substantial amounts expected to be generated \cite{cheng18}.

Additionally, vehicles should possess mechanisms enabling the establishment of agreements or consensus to address scenarios that require or benefit from the approval of multiple parties. 
Such scenarios can include selecting the best maneuver course of action, platoon leader election, or situational awareness validation such as reported event verification.
Reputation-based mechanisms \cite{ying22}\cite{suo20}, where individual vehicles are assigned reputation scores based on their perceived adherence to system rules, can enhance trust among road users and facilitate cooperative actions. However, relying solely on such mechanisms for cooperative systems may not be prudent, as vehicles with previously compliant behavior could potentially exploit their reputation scores for malicious intent.
Utilizing classical consensus algorithms, particularly Byzantine Fault Tolerance (BFT) ones, can be advantageous in reaching agreements in the presence of faulty nodes such as misbehaving vehicles. However, the complexity of such algorithms can pose challenges, especially in the context of local vehicular communications, where the communications channel may become overloaded or suffer imposing performance issues, especially in non-negligible packet loss scenarios.

Vehicular VAS necessitate a robust mechanism suitable for distributed systems, which can ensure the seamless occurrence of vehicle-to-vehicle (V2V) or vehicle-to-infrastructure (V2I) transactions. 
Ideally, transactions supporting vehicular services should take place locally in a peer-to-peer fashion, utilizing decentralized transactional systems that eliminate the necessity for third-party involvement.
Such services can include classical road services such as electronic toll collection (ETC) \cite{payasyougo}, emerging services such as electric vehicle (EV) charging \cite{shurrab22}, and even futuristic services such as rewards in cooperative maneuvers among fully connected autonomous vehicles (CAVs) \cite{sun21}. 
 
Solutions that are designed by taking into account these issues must be able to consider the opportunistic nature of vehicular networks and be as lightweight as possible. In this work, we explore a solution that is employed over the usual vehicular network traffic through the applicational extension concept. In the context of vehicular communications, we define an applicational extension as a payload originated in the applications layer of a communications protocol stack, issued and analyzed by a respective applicational protocol. The purpose of an applicational extension is to extend the features of the standardized protocols running on the lower layers of the protocol stack. Moreover, applicational extensions can introduce novel features by utilizing the existing transmitted packets as a transport medium, eliminating the necessity of transmitting complete separate packets. This approach significantly enhances the communication channel's efficiency and the ITS station's computational capability by decreasing the number of total message headers needed to be generated, transmitted, and analyzed.

We propose an applicational extension named Verifiable Event Extension (VEE), which can extend any type of ITS message. This extension seamlessly adds features associated with data security, consensus, and trading to the vehicular network without requiring any major overhaul of current communication protocols. To accomplish this, the extension data is divided into three modules that can be used separately or in any combination. For data security, a module named \textit{Ledger} is employed, which provides features associated with distributed ledger technologies (DLTs), such as data immutability and traceability. For consensus, a module named \textit{Consensus} is employed, allowing road users to reach non-critical agreements using classical consensus algorithms employed in distributed networks. For trading, a module named \textit{Token} is employed, which provides a basic cryptocurrency-based trading mechanism, enabling service exchanges among road users and possibly gamification features. 
This extension-based communications approach is based on and generalizes the local vehicular communications design of the VERCO (VERifiable COoperation) architecture \cite{verco}. This design offers a scalable and lightweight mechanism for data security in vehicular networks but relies on a specific maneuver coordination protocol to function. The approach presented here generalizes this mechanism to be applicable across various vehicular scenarios, while also introducing new consensus and trading features.

To address the mentioned issues, this document provides the following main contributions:
\begin{itemize}
    \item a communications protocol design based on the concept of applicational extensions to seamlessly add to or extend the features of standardized cooperative vehicular protocols with minimal overhead. We base our approach on the current ETSI ITS protocol stack;
    \item the Verifiable Event Extension (VEE), an applicational extension, and its modules, focused on enhancing security-related features of the vehicular network and ecosystem;
    \item envisioned protocols that employ VEE to complement cooperative vehicular protocols and scenarios;
    \item a performance-focused analysis of the proposed design using hardware-in-the-loop (HiL) simulations.
\end{itemize}

The rest of this document is organized as follows. Section \ref{sec:related-work} presents the state-of-the-art and an overview of related works. Section \ref{sec:system-model} provides the system model and assumptions of the proposed design. Section \ref{sec:vee} presents the design of VEE and an overview of its operation. Section \ref{sec:modules} provides the definitions of three modules in which features can be harvested to create advanced VEE-based protocols. Section \ref{sec:sps} presents three VEE-based protocols to enhance security-related features of typical cooperative driving scenarios. Section \ref{sec:simulations} provides a performance-based analysis of the proposed design using HiL setups. Section \ref{sec:conclusion} of this document presents the final conclusions drawn from the research findings and outlines the envisioned future work.
The list of acronyms used throughout this document is presented in Table \ref{tab:acronyms}.

\section{Related Work}\label{sec:related-work}

To the best of our knowledge, the concept of applicational extensions is new in vehicular communications. While message extensions are mentioned throughout V2X standards and allowed through the use of the ASN.1 extension marker ``\verb|...|'' in the ASN.1 definitions of V2X messages, their use remains mostly as a standardization feature for future message versions while also allowing for backward compatibility. 
The use of private extensions, i.e., non-standardized extensions, in such a way remains a discouraged feature \cite{facilities-protocols-standard}. Applicational extensions, however, expand the ITS message in a manner that preserves the original ASN.1 definition of the ITS message without using the extension marker. As a result, they do not introduce any potential complications to the integrity, interoperability, and analysis of the ITS message data.

\subsection{Extensions}

The use of other types of extensions, such as specific message extensions through modification of the message structure, is a popular topic in the field of vehicular communications. Message extensions, like Cooperative Awareness Messages (CAM) \cite{cam-standard} extensions, have been proposed to add features such as the future trajectory by Renzler et al. \cite{renzler20}. This work contributes to collision avoidance in platooning by extending the CAM message format to include the vehicle's future trajectory, and handle events beyond typical platoon operation. The authors introduce the CAM extension and compare its payload size with the standard, revealing a potential overhead of up to +90\%. They also outline event handling strategies. Simulations conducted with OMNeT++ and SUMO simulators validate both the CAM extension and event handling approaches. Although not analyzed in the paper, employing future trajectory-extended CAMs may still be a more practical approach, even when considering the +90\% overhead, particularly in terms of managing channel load, as opposed to also using a separate message such as Maneuver Coordination Messages (MCMs) \cite{mcs-standard}, which include said future trajectory. 
In \cite{hobert15}, Hobert et al. propose a CAM extension with data associated with several cooperative use-cases and ETSI ITS Facilities-layer protocols, like coordinated maneuvers and collective perception. CAMs can be here considered as a transport medium for other protocols' data, similarly to the design we propose in this document. Some of the proposed use-cases and protocols in the cited contribution have however progressed in recent years to their own standards employing dedicated ITS messages like MCMs and Collective Perception Messages (CPMs) \cite{cpm}. 
Similarly, Günther et al. \cite{gunther16} analyze the effectiveness of a CAM extension containing the collective perception data with respect to channel load and vehicles' awareness of the system elements. They also analyze this extension against a CPM-variant named Environmental Perception Message (EPM). Results show that the extended-CAM outperformed the EPM in most scenarios. 
Rondidone et al. \cite{rondinone18} propose a CAM extension for connected automated vehicles (CAVs) targetting platoon management. The system is integrated with the infrastructure, focusing on intersections support for traffic light signal timing optimization.
Loewen et al. \cite{loewen17} propose extensions to the CAM and DENM messages for added Vulnerable Road User (VRU) safety, namely focusing on cyclists. New station and event types are proposed through the use of sub-fields in both CAMs and DENMs, by employing the ASN.1 extension marker for backward compatibility reasons.

MCM extensions have also been a popular way to contribute to the topic of maneuver coordinations, being these a pillar of current standardization efforts. Correa et al. \cite{correa19} proposed an MCM extension which adds infrastructure support to coordinated maneuvers. Through this extension, roadside units (RSUs) can advise vehicles on the best course of action in any road segment which they oversee. Similarly, Mertens et al. \cite{mertens21} built on this concept, further extending the MCM, with a request-response mechanism for RSU-based maneuvers suggestions. Maksimovski et al. \cite{prima} also extend MCMs to enable a maneuver negotiation protocol. Using this protocol, vehicles can request maneuvers and respond to these requests. The requesting vehicle finalizes the negotiation by transmitting a confirmation message.



Given the above overview of the current state-of-the-art on ITS message extensions, it is evident that there is no generalized applicable ITS message extension capable of facilitating the desired functionalities of data security, consensus and trading, or other application-specific features. The mentioned contributions require the modification of the established message format, thus in order to achieve interoperability a corresponding adjustment to the underlying standard is needed.
The system proposed in this study capitalizes on the existing lack of methodology for generalized extensions within vehicular networks by introducing a novel approach to address this challenge.

\subsection{Data security, consensus, and trading}

In the realm of data security, consensus protocols, and trading mechanisms within vehicular environments, numerous academic contributions have been introduced. The sheer volume of research in these domains within the context of V2X communications is substantial, rendering a comprehensive evaluation beyond the scope of this document. However, we outline selected contributions that bear greater relevance to our contribution.

For data security, the vast majority of contributions employ blockchain and distributed ledger technologies (DLTs) in some way or another \cite{mendiboure20}\cite{grover22}\cite{peng21}. In this work, data security is provided by the VERCO architecture \cite{verco}, which allows for vehicular data-security enhancements while being non-reliant on the infrastructure like RSUs, contrary to most blockchain-based contributions. It is based on a novel directed acyclic graph (DAG) DLT mechanism which records data related to cooperative maneuvers. This architecture is accompanied by a maneuver coordination protocol, wherein blockchain-related details are seamlessly integrated. The blockchain architecture however lacks generalization for the fact that is solely focused on cooperative maneuvers, relying on the proposed maneuver coordination protocol to work. Here, we generalize the blockchain integration into the vehicular communications through the proposed extension, which can be applied to both cooperative maneuvers and other types of vehicular applications.

Regarding consensus among vehicles, a substantial number of contributions provide non-BFT consensus algorithms mainly for coordinated maneuvers \cite{hafner22} due to their lower message complexity, required in safety-critical scenarios. Some BFT protocols however have been proposed. 
Similarly to the consensus use-case we provide in this document, Liu et al. \cite{liu19} provide a BFT-based algorithm for connected vehicles named BFCV in order to evaluate and verify reported vehicular events. The authors also propose a scheme named proof-of-eligibility in order to only allow vehicles which may have knowledge or view of the event to participate in the consensus process. 
In the same way, Cao et al. \cite{cao08} proposed another consensus-based scheme named proof-of-relevance which aims at proving which vehicles are relevant to an event and are allowed to report it.
Sawade et al. \cite{sawade18} employ the BFT Turquois consensus algorithm \cite{turquois} for state synchronization among vehicles during coordinated maneuvers. The authors found that the proposed system is feasible up to 20\% packet loss, with a considerable performance degradation above these values.
Our past study \cite{mxm-pbft} employs PBFT \cite{pbft} among vehicles in order to record consensus-agreed cooperative maneuvers data in a DLT. 
Albeit a limited analysis is only provided, similarly to the results of the previous mentioned work, the performance of the consensus algorithm degrades rapidly above 20\% packet loss.

Regarding trading mechanisms in V2X, most approaches also employ DLTs and blockchains in one way or another \cite{jolfaei23} \cite{deng20} \cite{jabbar21}. Bartolomeu et al. \cite{payasyougo} have introduced an ETC mechanism that leverages 5G device-to-device (D2D) communications and relies on the IOTA DLT \cite{tangle}, specifically tailored for the Internet of Things (IoT) domain. Notably, it aims at executing the tolling transaction locally. This approach involves the end user's smartphone, carried within the vehicle, conducting the financial transaction locally in conjunction with the RSU. This transaction in the IOTA framework employs a proof-of-work consensus mechanism, computed by the smartphone. One use-case that we provide in this work is similar to this approach, though it employs an on-board unit (OBU) present in the vehicle, being responsible for the provision of vehicular communications. Current V2X tolling mechanisms \cite{sae-j3217} are also employed in conjunction with lighter cryptographic mechanisms, without the need for performing the computationally intensive proof-of-work.

Overall, aspects like channel performance, implementation and testing in common OBU/RSU hardware, and overall generalization of the proposed contributions, are factors which are mostly ignored regarding contributions related with data security, consensus, and trading in V2X. We try to tackle these issues by providing the design of generalized modules which provide said features, whilst also validating their feasibility using hardware-in-the-loop simulations.

\section{System Model and Assumptions}\label{sec:system-model}

In this section, we present the system model for the proposed design and outline the key assumptions considered in our study. While our design is based on the current ETSI ITS protocol stack \cite{etsi-its}, we also provide some details of its implementation implications regarding the IEEE WAVE protocol stack \cite{wave}.

\subsection{Assumptions}

The concept of applicational extensions is not envisioned in the current ITS standards, however current ITS message decoder implementations are, in principle, able to safely decode a standard ITS message with any extra appended data (such as the VEE) since these messages have a predefined structure with constant length data fields or variable length fields with their length declared in the message itself. Any extra appended data (the extension) not consumed by the ITS message decoder can be discarded or further analyzed if the ITS-S employs the associated applicational protocol. We therefore assume that this proposal of applicational extensions do not invalidate the correct decoding of extended ITS messages and the features provided by the existing ITS protocols. 

The proposed design is independent of the radio communications access technology in vehicular networks, with current mainstream technologies being ITS-G5 and C-V2X. The applicational extension data should be small enough (in length) so that it does not considerably impact the average end-to-end communication latency between stations. This can even be assumed taking into account a couple of considerations. Let $L_{E}$ be the length of the extension and $L_{P}$ the total length of the packet without the extension. The channel load overhead $O_+$ (\%) of the extension per packet is,

\begin{equation}
    O_+ = \frac{L_E}{L_P} * 100
\end{equation}

Trivially, the smaller is $L_E$ the smaller is the overhead. If we assume that extended packets are only sent occasionally, i.e., road users exchange both non-extended and extended packets, the expected value of the overhead is, 

\begin{equation}
   E(O_+) = \frac{L_E N_{P+}}{L_P N} * 100
\end{equation}
where $N_{P+}$ is the number of extended packets, and $N$ the total number of packets.

Considering $L_E \ll L_P$, then the impact of the extension is minimal. Considering also that $ N_{P+} \ll N$ then the overall impact of the extension is even less. For example, consider one station transmitting CAMs, with an average length of 400 bytes, at a rate of 10 Hz. Consider also that only one of these CAMs is extended per second (1 Hz, 10\%), with an extension payload of 50 bytes. The average overhead will be only $+1.25\%$. At the default ITS-G5 channel bitrate of 6 Mbps, this equates to an extra 67 $\mu s$ transmission time, per second. 
Naturally, such assumptions may not always hold true, depending on the analyzed time frame and current vehicular scenario.

Nevertheless, we evaluate the impact of our design using a performance analysis.
A performance analysis is especially important in vehicular networks due to the safety-critical nature of the involved communications:
\begin{enumerate}
    \item Safety: potential issues should be resolved as fast as possible;
    \item System state evolution: the  employed communication protocols should finish as fast as possible, considering the dynamic nature of the vehicular environment, as prolonged protocol execution can potentially compromise its intended functionality;
    \item Channel efficiency: the capacity of the wireless communication channel is limited, and it can only accommodate a certain amount of data being transmitted.
\end{enumerate}

A level of security is assumed. The standard-compliant security entity of the protocol stack should cryptographically sign the exchanged messages in the vehicular network. In both ETSI ITS and IEEE WAVE protocol stacks, the ECDSA signing algorithm provides authentication and asserts a station's permissions in the vehicular network. If sensitive or private information is required to be exchanged, exchanged messages must also be encrypted. In both ETSI ITS and IEEE WAVE protocol stacks, the ECIES encryption algorithm provides an encryption feature and establishes private communication channels. Both the exchanged ITS message data and applicational extension should be able to take advantage of both of these security features, where it is assumed that every message is signed/authenticated. Revoked or misbehaving stations should be correctly handled by the security entity, and originating messages from these should not reach the upper layers of the protocol stack. Nevertheless, the proposed design can help identify misbehaving stations even if the security entity fails to do so.

Regarding the level of vehicular automation, it only depends on the level of automation required by the extended standardized ITS protocols themselves. It is not envisioned that the applicational protocols need or can affect the control mechanisms of the vehicle, at least in a significant, direct, or critical manner. 

\subsection{Requirements}

The proposed design requires that the entity generating ITS messages must first request to the applicational extension protocol any extension to be added to the ITS message before forwarding it to the lower layers of the protocol stack. In the ETSI ITS protocol stack, this means that the Facilities \cite{facilities-standard} layer/service generating messages such as CAMs \cite{cam-standard}, DENMs \cite{denm-standard}, etc., must request to the Applications layer/service the extension before forwarding the message to the lower Network and Transport layer/service. In the IEEE WAVE protocol stack, both the protocol generating ITS messages (e.g., BSMs \cite{bsm-standard}) and the applicational extension protocol coexist in the Applications layer, and the former should request the extension to the latter before forwarding the message to the Network and Transport layer.

Eventually, the extension-related operations could be shifted from the applicational layer to the lower layers of the protocol stack. However, this would require more widespread implementation of extension-related logic, backed up by standardization efforts, with the possible added constraints on the liberty of adding, modifying and implementing new types of extensions with a more \textit{applicational} purpose.

\section{The Verifiable Event Extension (VEE)}\label{sec:vee}

VEE is an applicational extension that appends to standard ITS messages. It is a complementary component to ITS messages and is managed by the Verifiable Event Protocol (VEP). VEP is responsible for managing, extending, and analyzing ITS messages incorporating the VEE. 

In the ETSI ITS protocol stack, VEP runs in the Applications layer. A representation of VEP in the protocol stack is presented in Figure \ref{fig:stack}. The design and implementation of VEP and VEE aim to be as standard compliant as possible so as not to affect the normal functioning of ITS-Ss not employing the VEP protocol. VEP extends ITS messages with the VEE, providing additional features to the vehicular network while maintaining the data integrity of these messages. A representation of an ETSI ITS message packet data unit (PDU) extended with the VEE is presented in Figure \ref{fig:pdu}.

\begin{figure}
    \centering
    \includegraphics[width=0.9\linewidth]{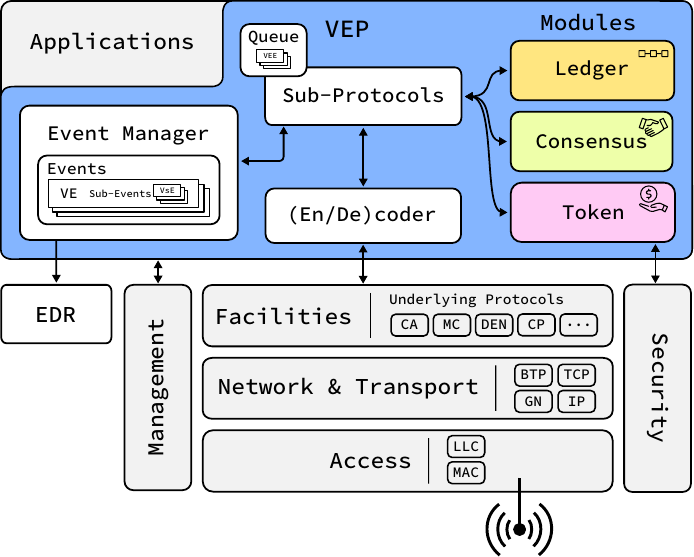}
    \caption{ETSI ITS protocol stack with the Verifiable Event Protocol (VEP).}
    \label{fig:stack}
\end{figure}

\begin{figure*}
    \centering
    \includegraphics[width=\textwidth]{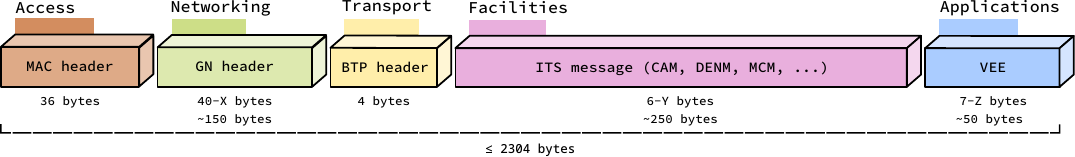}
    \caption{PDU of an ETSI ITS message extended with VEE. The header sizes of each PDU component are shown. For the GeoNetworking (GN) header, ITS messages, and VEE, only the minimum sizes are shown since these have variable sizes. For these, the \textit{typical} sizes are also shown, however these values can vary a lot according to several factors such as security parameters (GN), message type (ITS message), and current vehicular event type (VEE).}
    \label{fig:pdu}
\end{figure*}

VEP keeps track of vehicular events using the concept of Verifiable Events (VEs) generated following specifications provided by VEP sub-protocols (SPs). Each VE is associated with a single type of vehicular event, such as a cooperative maneuver, identified by an event ID. It can be decomposed into several (verifiable) sub-events (VsEs), associated with several instances of a vehicular event. For example, a cooperative maneuver event can be decomposed into several sub-events associated with, the maneuver coordination or negotiation, the maneuver execution, and the maneuver confirmation.

An SP generates VEs and VsEs by analyzing the current vehicular context, primarily by examining the received ITS messages. It is also responsible for issuing VEEs and choosing which ITS messages will be extended with these. An SP is defined based on the objectives of a particular event type. Its design and implementation are fundamentally constrained by the underlying protocols (UPs) responsible for generating the extensible ITS messages. 

SPs issue VEEs by utilizing functions of independent entities named modules. These modules provide the main features of VEP and supply data included in each VEE.
The structure of VEE is represented in Figure \ref{fig:asn1}.
The structure of VEE comprises an event ID (integer), an SP ID (integer), and various containers (here, the Ledger, Consensus, and Token), each associated with a module named equally to the container it supplies. These containers can be utilized independently or in combination to handle different types of vehicular events. The selection of specific modules depends on the specification of the SP associated with a particular vehicular event. 

 \begin{figure}
    \centering
    \includegraphics[width=\linewidth]{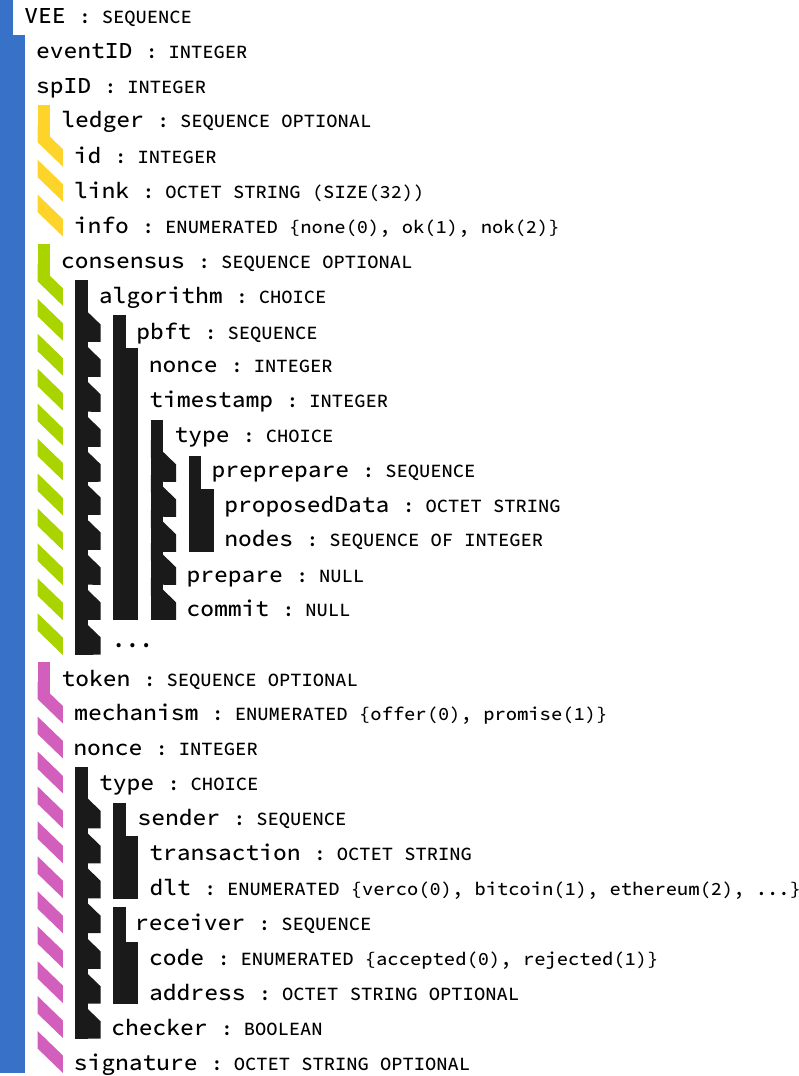}
    \caption{A representation of the VEE ASN.1 message format. A full bar represents a \texttt{SEQUENCE} field, a dented bar represents a \texttt{CHOICE} field, and a dashed bar represents an \texttt{OPTIONAL} field.}
    \label{fig:asn1}
\end{figure}

It is crucial to emphasize that the functioning of an SP must remain independent of the UPs, with any potential impact on UPs being minimal and negligible in terms of performance. The failure or malfunction of an SP process must not result in the invalidation of the operation or the integrity of the UPs. In particular, the total data size added by the extension to the ITS message must also mind and comply with the maximum packet sizes limited by wireless access technology and lower layer protocols.

Ultimately, VEP, responsible for handling V2X messages and events, can be integrated to support the EDR module of a vehicle. By selectively processing vehicular events, VEP can identify and record potentially important VEs in the EDR. This integration enhances the EDR's existing physical dynamics-based recorded features \cite{edr-standard}, providing a more comprehensive and verifiable dataset for post-event analysis and investigation.

Two different operational modes, named \textit{interactive} and \textit{passive} modes, are associated with the use of VEE, whose utilization depends on the ITS message type and its data. A representation of ITS message handling by the SPs and subsequent VE generation is presented in Figure \ref{fig:msg-path}. Each mode is detailed below.

\begin{figure}
    \centering
    \includegraphics[width=\linewidth]{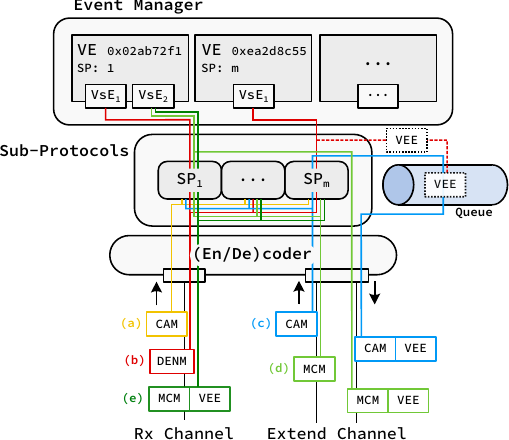}
    \caption{Representation of the message handling process in VEP. Possible paths of ITS messages as the protocol handles them are shown. (a) Not all received messages are deemed important by the protocol. (b) Some messages are handled by multiple SPs. In this case this DENM message triggers SP$_m$ to create a passive operation process, adding a VEE to the queue, while SP$_1$ simply saves it. (c) A queued VEE is added to a message that could potentially be extended. (d) Some messages are extended using the interactive operation mode, spending no time in the queue. (e) VsE data can be composed of several messages, extended or not.}
    \label{fig:msg-path}
\end{figure}

\subsection{Interactive mode}
VEP analyses the ITS message contents and appends an associated VEE. This VEE ultimately extends the UP with related data. Interactive VEEs are issued with spontaneous, event-triggered ITS messages.
Only one SP can extend each ITS message, being this condition also valid for the passive mode.
    
\subsection{Passive mode}\label{sec:passive-mode}
VEP fetches any internal queued data and creates a VEE, appending the ITS message with it. The data in this VEE can be associated with any applicational protocol that is unrelated to the lower-layer underlying  protocol. 
Given the lack of association between this data and the underlying V2X safety-critical protocols, it's arguable that transmitting such data could be performed using alternative channels, like the Service Channel (SCH), instead of the Control Channel (CCH) where most standardized ITS messages are transmitted. Nevertheless, this operation mode can still have a positive impact on overall channel load management on multi-channel systems, be it employed over CCH or SCH -transmitted messages.
This mechanism is used mostly to increase channel and spectrum efficiency through reduction of the number of broadcasted packets. This means that a new packet with all the  required protocol stack headers does not need to be issued just for the applicational protocol logic. 
Passive VEEs are issued with periodic, time-triggered ITS messages. Eventually, they can also be used over event-triggered ITS messages for which the SPs do not issue interactive mode VEEs. Our analysis below on passive VEEs only considers periodic ITS messages, as these constitute the vast majority of V2X traffic in most vehicular scenarios. 

The performance of VEE-associated applicational SP is associated with the performance of the UPs used as a transport medium. Let us assume a combination of UPs with variable transmission message period $T$, with expected value $E(T)$, with a probability density $f(t)$, cumulative distribution $F_T(t)$, and bounded in $\left[T_{min}, T_{max}\right]$.

Let us assume \circled{1} a packet processing time $\tau_{ps}$ and a packet transmission time $\tau_{tx}$ much smaller than the possible values of $T$, $\tau_{ps} + \tau_{tx} \ll T$.

A queued VEE $q_j$ at the position $j = 0$ in the queue $q$ ($j \in \{0,\ldots,\texttt{len}(q)-1\}$) will be transmitted with a delay $\tau_{q_0}$ bounded in $\left[0, T_{max}\right]$. 
That is, 0 if VEE issuance coincides with UP message issuance, and $T_{max}$ if VEE issuance needs to wait the maximum period of the UP. The traffic generated by the UP can also be viewed as a renewal process, where $\tau_{q_0} = W$ can be considered the waiting time or residual time of the process. Following \cite[Chapter~3]{gallager12}, the expected value of $\tau_{q_0}$ is, 

\begin{equation}
    E(W) = \frac{E(T^2)}{2E(T)}
\end{equation}
with a cumulative distribution function,
\begin{equation}
    F_W(w) = \frac{1}{E(T)} \int^w_0 (1-F_T(t)) dt
\end{equation}
If $T$ is a discrete variable then $F_W$ can be calculated as sum over the possible $t$ values,
\begin{equation}
    F_W(w) = \frac{1}{E(T)} \sum_{t_i}^w \left[ (1-F_T(t_i))  (t_i - t_{i-1}) \right]
\end{equation}

Subsequent queued VEEs ($j > 0$) will be transmitted with an average delay,
\begin{equation}\label{eq:queued-vee}
E(\tau_{q_j})= E(W) + jE(T)
\end{equation}

A representation of the UP traffic and the delays of the queued VEEs are presented in Figure \ref{fig:up-vee}.

\begin{figure}
    \centering
    \includegraphics[width=1\linewidth]{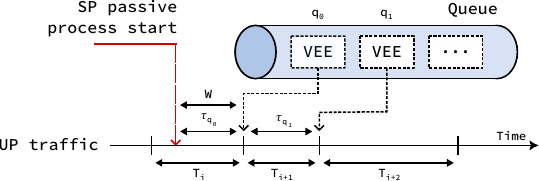}
    \caption{UP traffic and VEE queue delay representation.}
    \label{fig:up-vee}
\end{figure}

The overall process latency of an SP implementing a multi-stage process with multiple message exchange rounds using the passive mode is similar to that of the above formulas.
A communications protocol can be divided into several stages, where each stage can be further divided into several message exchange rounds. 
A message exchange round can itself contain non-concurrent messages, e.g., a request (or various concurrent requests) and a response (or various concurrent responses) or a single (or various concurrent) message(s). 
Let us call a concurrent message exchange a \textit{sub-round}. 
Assuming \circled{1}, and that there \circled{2} is no packet loss/message retransmission, the overall delay of a process with $k$ sub-rounds is, 

\begin{equation}\label{eq:passive-delay}
\tau_p = \sum_{i=1}^{i=k} \tau_i
\end{equation}
where $\tau_{i}$ is the delay of the sub-round $i$. For a rough estimation of the average $E(\tau_p)$, $E(\tau_i)=E(T)$ can be used, giving us $E(\tau_p)=E(T)k$.  $\tau_p$ will also be bounded $\left[ T_{min} (k-1) , T_{max} k \right]$. However, a process to advance to the next stage or round usually requires that $m$ total sub-round messages or $g\leq m$ valid messages be received, such as in the case of fault tolerance algorithms. This means that $\tau_i$ will be as fast as it takes for $g$ valid messages to be exchanged. Specifically, it will equal the delay of the $g$th message. Generalizing both $W$ and $T$ distributions with the variable $Y \in \{T, W\}$, the delay distribution of $g$th message can be described through order statistics. A order statistic $Y^m_{(g)}$ is the $g$th largest $Y$ value in the ordered sample $\{Y_1, Y_2, \ldots, Y_{m}\}$. 
Given $F_Y(t)$, the cumulative distribution function of $Y^m_{(g)}$ \cite[Chapter~2]{order-statistics} is,

\begin{equation}
    F_{Y^m_{(g)}}(t) = \sum^m_{j=g} \binom{m}{j} \left[ Y_T(t) \right]^j \left[1- F_Y(t) \right]^{m-j}
\end{equation}

If all messages need to be exchanged in the sub-round, then $g=m$, and the above is simplified to,
\begin{equation}
    F_{Y^m_{(m)}}(t) = \left[ F_Y(t) \right]^m
\end{equation}

Assuming $Y$ is a discrete variable, the expected value of $Y^m_{(g)}$ is the sum over of the probabilities of the possible $t$ values, 
\begin{equation}
    E(Y^m_{(g)}) = \sum_{t_i} \left[ (F_{Y^m_{(g)}}(t_i) - F_{Y^m_{(g)}}(t_{i-1}))*t_i\right] 
\end{equation}

The average total process delay $\tau_p$ is then,
\begin{equation}\label{eq:delay-ksrounds}
     E(\tau_p) = \sum_{i=1}^{i=k} E(Y^{m_i}_{(g_i)})
\end{equation}
where $Y$ is either $W$ or $T$, $g_i$ is the number of messages required to be exchanged and $m_i$ the total possible number of messages, at sub-round $i$. While $Y=W$ can be chosen in empty queue scenarios, i.e., low traffic and beginning of the passive mode process, $Y=T$ can be used to estimate delays in higher traffic scenarios or in subsequent sub-rounds, possibly taking also into account the $j$ length of the queue (adding $\tau_{q_j}$ to $\tau_{p}$). This choice will heavily rely on the logic of the SP employing the passive mode. Mixed scenarios where several stations have different queue lengths are out of the scope of this analysis.

\section{Envisioned Modules}\label{sec:modules}

VEP Modules provide the main features of the proposed design. Each module is an independent entity (see Figure \ref{fig:stack}), which SPs can use separately or simultaneously for a multitude of use cases. Each module is also responsible for analyzing and generating the VEE containers' data. In this work, we envision three different modules: 
the Ledger module, to increase vehicular data security; the Consensus module, to facilitate consensus-building mechanisms that allow participants to reach agreements on shared data, rules, or actions; and the Token module, to provide a framework for the secure and efficient exchange of digital assets or services.

\subsection{Ledger}\label{sec:ledger-module}

The Ledger module provides information related to a distributed ledger in order to enhance data security. 
Data security in vehicular networks is largely employed using blockchains and distributed ledger technologies (DLTs) \cite{mendiboure20}\cite{grover22}\cite{peng21}. In this work, we establish data security through the utilization of the VERCO architecture \cite{verco}. 
Notably, VERCO offers enhancements in vehicular data security without a dependency on infrastructure elements such as RSUs. The core of VERCO's data security framework lies in a novel DLT mechanism referred to as \textit{localchain}, which serves as a repository for cooperative maneuver-related data. This mechanism empowers vehicles, RSUs, and other entities within the local road environment to record critical vehicular data with replication properties in a lightweight, immutable, and traceable database. Ultimately, it serves as a tool to resolve and pinpoint instances of missing data as vehicular data is linked to each other. It also provides universal identifiers of vehicular events as several entities produce the same identifier (block hash) for the same event in a distributed manner.

A localchain is a geographically-based distributed ledger that adopts a directed acyclic graph (DAG) structure to address the high scalability demands of vehicular networks. It utilises sharding techniques \cite{sharding}, whereby each station stores only a portion of the data of the complete localchain. This technique is necessary because a station cannot collect and store all the vehicular data exchanged due to range restrictions, packet loss scenarios, or storage limitations.
VEE generalizes the maneuver-focused communication protocol and associated messages proposed in the original contribution of localchains, i.e., instead of Verifiable Cooperation Messages (VCMs) encompassing ledger information, VEE now encompasses this information instead, enabling the use of localchains over other maneuver coordination protocols, or even other types of vehicular communication protocols.

The Ledger module is adapted to the localchain technology. It keeps track of the various system localchains and groups V2X messages into localchain blocks. Each block is associated with a VsE and is composed of the ITS messages that are associated with that sub-event.
The SP must explicitly define the types of and which specific messages, as well as their order in each block. 
Every ITS/block message is mandatorily signed by the standard, thus validating and verifying the authenticity and provenance (who issued the message) of the data recorded in each block.

A block can only be created if the station received every message specified by the SP for that VsE. Otherwise, if a block is created using different messages, or a different number of messages, or even following a different message order, the station will create a different block from its peers, resulting in a different block hash, ultimately adding inconsistency to the localchain.

In the local communications among vehicles, the VEE's Ledger container provides contextual information about where to link data (which block to link the new localchain block to) and some limited information about the linked block. It comprises the localchain ID, the previous block's hash (SHA-256) to link the new one to, and an informational flag related to the previous block, providing additional context depending on the used SP.

A representation of a localchain composed of blocks containing VE data is presented in Figure \ref{fig:localchain}.

\begin{figure}[b]
    \centering
    \includegraphics[width=1\linewidth]{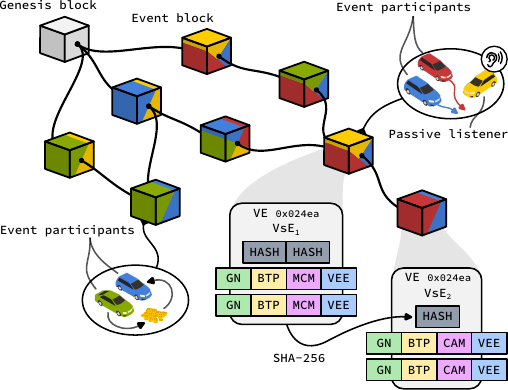}
    \caption{Representation of a localchain and its sharding feature. Different block colors represent vehicles which have the full block data, i.e., the ones which received all the required packets to produce the block.}
    \label{fig:localchain}
\end{figure}

\subsection{Consensus}\label{sec:consensus-module}

The Consensus module handles consensus processes, providing an agreement mechanism for various applications.

Given the possibility of packet loss, a fault-tolerant consensus algorithm can be recommended. Byzantine fault-tolerance (BFT) algorithms can also provide enhanced resilience against the presence of malicious nodes. 
BFT algorithms are however associated with protocols with larger message complexity to be considered for safety-critical scenarios. These protocols typically require an higher number of messages to be exchanged, when compared to lower complexity communications protocols employing simpler mechanisms such as request-response. 
Given the complexity of such algorithms and their probable disparity with the simplicity of typical vehicular communication protocols, it is envisioned that most SPs employing the Consensus module will operate using the passive operation mode. Therefore, the performance of a consensus process will be constrained by the amount of traffic available generated by the UPs. 
Nevertheless, we here consider one BFT algorithm, in order to analyze the performance of such algorithms running on the proposed extension-based communication system for non-safety-critical applications. 

We take as an example one of the most popular BFT consensus algorithms, the Pratical Byzantine Fault Tolerance (PBFT) \cite{pbft} protocol. This protocol is divided into three stages, the pre-prepare, the prepare, and the commit, and employs retransmission logic to account for any missed packets. It is most likely implemented using the passive mode for operation since its complexity and logic may be too different from any vehicular UP to be implemented using the interactive operation mode. In this protocol, with $n$ nodes, the primary node issues one pre-prepare message (in the pre-prepare stage) with a proposal.  Then, the $n-1$ replica nodes broadcast a prepare message (in the prepare stage) as an acknowledgment and in agreement with the proposal. Then, in the last stage (the commit) all $n$ nodes issue a commit message, confirming the receipt of the prepare messages of the other nodes. Following the nomenclature of Section \ref{sec:passive-mode}, the PBFT process has $k=3$ sub-rounds. PBFT is also fault-tolerant, meaning it can still advance if it does not receive the $m$ total messages of the sub-round. Specifically, a node advances in the process if it receives at least $2f$ messages per stage, where $f=\frac{n-1}{3}$ is the number of faulty nodes. 
A faulty node is a node that, either intentionally or unintentionally, deviates from correct behavior. A faulty node may delay, drop messages or send incorrect messages.
Given that we have 3 sub-rounds, the overall process delay will be $\tau_p=\tau_1 + \tau_2 + \tau_3$, assuming \circled{1}, and \circled{2}. If we also assume \circled{3} empty queues during the PBFT process and a \circled{4} single process running throughout its duration (i.e., no parallel consensus processes running by the same stations), we can approximate the average $\tau_p$ using Eq. \ref{eq:delay-ksrounds} as, 
\begin{equation}\label{eq:pbft-delay}
    E(\tau_p) = E(W^1_{(1)}) + E(W^{n-1}_{(2f)}) + E(T^{n}_{(2f+1)})
\end{equation}
being $E(W^1_{(1)})$ equal to $E(W)$. From the point of view of a node participating in the PBFT process, the overall delay may be slightly lower than the above approximation. For example, from the point of view of the primary node (or a replica), $E(\tau_3)$ is better approximated as $E(T^{n-1}_{(2f)})$ since it only needs to wait for $2f$ messages of the other $n-1$ nodes in the commit stage (it instantly ``receives'' its own commit message).

If a node does not receive the number of expected messages in a stage, it can retransmit the message associated with that stage after some delay $\tau_d$. 
In this case, the total average delay of a PBFT process is, assuming \circled{1}, \circled{3}, and \circled{4},
\begin{equation}\label{eq:pbft-delay-rt}
\tau_{pr}=  \tau_p + r\tau_d
\end{equation}
where $r$ is the number of repeated rounds/stages.

The Consensus module running on top of the VEE's passive operation mode should, in principle, support any type of consensus algorithm. Indeed, consensus algorithms designed for partially synchronous networks, characterized by periods of both synchrony and asynchrony, where message delays and node failures can occur but are eventually limited, are expected to retain their essential properties of safety (e.g., agreement and validity) and liveness (e.g., termination and progress) when operating on top of a time-bounded UP.

\newtheorem{theorem}{Theorem}
\begin{theorem}[Safety preservation]
Any consensus algorithm designed for partially synchronous networks running on top of a time-bounded UP preserves its safety properties.
\end{theorem}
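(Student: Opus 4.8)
\emph{Proof proposal (plan).} The plan is to prove the statement by a \textbf{behavioral-containment (simulation) argument}: I will show that every execution of the consensus algorithm composed with the time-bounded UP corresponds to an admissible execution of the \emph{same} algorithm in the abstract partially synchronous network model, and then invoke the hypothesis that safety holds across all admissible executions of that model. First I would fix the abstract model $\mathcal{M}$ for partial synchrony, in which correct nodes exchange signed messages subject to arbitrary but eventually-bounded delays, messages may be reordered or lost before the global stabilization time, and an adversary controls at most $f$ Byzantine nodes but cannot forge the signatures of correct nodes. The composed system $\mathcal{M}'$ is then $\mathcal{M}$ with its point-to-point/broadcast channel replaced by the VEE passive-mode transport over the UP.

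The second step is to characterize the UP purely as a message channel and verify that it refines the channel assumed by $\mathcal{M}$, via three channel properties. For \textbf{integrity/authenticity}: because every ITS message and its appended VEE are mandatorily ECDSA-signed (Section~\ref{sec:system-model}), the UP cannot corrupt, forge, or spoof a correct node's consensus message, so what a node delivers is exactly what an honest sender emitted, matching the authenticated-channel assumption of $\mathcal{M}$. For \textbf{bounded delay}: a consensus message piggybacked in the queue is delivered after a finite, bounded delay $\tau_{q_j}$ (for the head of queue, $\tau_{q_0}\in[0,T_{max}]$, and in general $E(\tau_{q_j})=E(W)+jE(T)$ from Eq.~\ref{eq:queued-vee}), so the transport never introduces an \emph{unbounded} delay. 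For \textbf{loss and reordering}: opportunistic broadcast may drop a message (a node out of range, or packet loss) or deliver messages out of order, both of which are already permitted by $\mathcal{M}$ during asynchronous periods and are absorbed by the algorithm's retransmission logic.

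The third step is the containment itself: given any execution $\sigma'$ of $\mathcal{M}'$, I would construct a matching execution $\sigma$ of $\mathcal{M}$ with the identical sequence of node-local events (sends, receives, internal transitions, decisions) by reading off, for each delivered message, the actual transport delay it experienced in $\sigma'$ and declaring that exact delay as the (adversarially chosen) message delay in $\sigma$. This is legitimate because partial synchrony places no upper bound on per-message delay before stabilization and only requires delays to be eventually bounded by some $\Delta$ after the global stabilization time; the UP's per-hop bound $T_{max}$ is in particular such a $\Delta$, so $\sigma$ is admissible. Hence the set of node-local event traces realizable by $\mathcal{M}'$ is contained in those realizable by $\mathcal{M}$.

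Finally I would close the argument: safety properties (agreement, validity) are trace properties over node-local events — they constrain only \emph{which} values may be decided given the messages exchanged, never \emph{when} events occur, and therefore hold even during purely asynchronous periods. Since they hold on every admissible trace of $\mathcal{M}$ by hypothesis and every trace of $\mathcal{M}'$ is such a trace, they hold on $\mathcal{M}'$. The \textbf{main obstacle} I expect lies in the rigor of the second step rather than the third: one must argue that the UP channel does not covertly violate an assumption the algorithm relies on — in particular that appending the VEE never truncates or alters a consensus message (guaranteed by the small-payload and preserved-ASN.1 assumptions of Sections~\ref{sec:related-work}--\ref{sec:system-model}) and that authentication genuinely prevents transport-level tampering. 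If any of these guarantees fails, the clean channel refinement, and with it the containment, breaks down.
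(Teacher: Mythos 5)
Your proposal is correct, but it takes a genuinely different route from the paper's own proof. The paper argues by contradiction: it assumes two correct nodes decide differently, observes that the time-bounded UP guarantees eventually recurring synchronous periods in which the algorithm behaves as in a fully synchronous system, and concludes that any disagreement arising in asynchronous periods is eventually ``resolved'' because asynchrony cannot persist indefinitely. Your argument instead is a simulation/refinement one: you never use the synchronous periods at all, but show that every execution of the algorithm over the VEE passive-mode transport is an admissible execution of the abstract partially synchronous model (by reading off actual transport delays as adversarially chosen ones, with the UP's $T_{max}$ serving as a valid post-stabilization bound $\Delta$, and signatures ruling out transport-level corruption), and then invoke the fact that safety is a trace property holding on \emph{all} admissible executions of that model, including purely asynchronous ones. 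This buys you something real: the paper's contradiction argument leans on eventual agreement during synchronous periods, which is liveness-flavored reasoning and sits uneasily in a safety proof (safety of algorithms designed for partial synchrony, such as PBFT, must hold even if synchrony never arrives), whereas your containment argument isolates exactly the property that matters — that the composed transport's behaviors lie inside the asynchronous envelope the algorithm already tolerates — and so it is arguably the cleaner and more robust of the two. The one step you should tighten is the treatment of message loss: if the abstract model assumes reliable (eventually delivered) channels, a transport that drops packets does not literally refine it, so you should either map a lost message to an infinitely delayed one (harmless for safety, which is a property of finite prefixes) or note, as you partially do, that the algorithm's own retransmission logic restores eventual delivery; with that caveat made explicit, the refinement closes.
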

\begin{proof}

We consider a consensus algorithm A designed to operate in a partially synchronous network, where message delays and node failures can occur but are eventually limited. Algorithm A runs on top of an underlying time-bounded UP protocol that facilitates the communication of consensus-related messages in a time-bounded manner.

For the sake of contradiction, let's assume that algorithm A when running on the time-bounded UP, violates its safety property. This means there exists an execution in which two correct nodes decide on different values $v_1$ and $v_2$ for the same consensus instance.

During this execution, since UP is time-bounded, message delays and node failures are eventually limited. As a result, nodes experience periods of synchrony where message 
delivery is guaranteed and delays are bounded.

In these synchronous periods, algorithm A behaves similarly to when it runs in a completely synchronous system. Thus, all correct nodes must eventually agree on the same value for the consensus instance during synchronous periods.

Since the algorithm A operates correctly during synchronous periods, it follows that any disagreement in decisions during asynchronous periods must be caused by message delays or other asynchrony-related issues.

However, because UP is time-bounded, these asynchronous periods cannot persist indefinitely, except if a node leaves the network causing the consensus algorithm to restart due to timeouts. 
Hence, asynchrony is eventually bounded and resolved, allowing algorithm A to achieve agreement.

Therefore, algorithm A preserves its safety property when running on the time-bounded UP.
\end{proof}

\begin{theorem}[Liveness preservation]
Any consensus algorithm designed for partially synchronous networks, running on top of a time-bounded UP, preserves its liveness properties.
\end{theorem}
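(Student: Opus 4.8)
The plan is to mirror the structure of the preceding safety proof, again proceeding by contradiction, but now targeting the liveness property (typically termination or eventual progress) rather than agreement. I would begin by fixing the same setup: a consensus algorithm $A$ designed for partial synchrony, composed over a time-bounded UP. The crucial structural fact I would exploit is that the delay analysis of Section \ref{sec:passive-mode} already establishes that every message carried by the passive mode is delivered within a bounded delay — each sub-round delay $\tau_i$ is governed by the order statistic $Y^{m_i}_{(g_i)}$ with $Y$ bounded in $[T_{min}, T_{max}]$, and the total process delay of Eq. \ref{eq:delay-ksrounds} is therefore finite and upper-bounded by $T_{max}k$. In the partial-synchrony model, this bounded-delay guarantee is precisely what is needed to realize a Global Stabilization Time: from the consensus layer's viewpoint the network is (eventually, and here essentially always) synchronous with a known delay bound, which is the regime in which $A$'s native liveness proof operates.

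Next I would make the contradiction explicit: suppose $A$ running on the UP admits an execution in which some correct node never decides. Since liveness of $A$ in its native model is guaranteed once delays are bounded past stabilization, non-termination must stem either from (i) the UP violating its delay bound, or (ii) the timeout and retransmission machinery of $A$ misfiring. Case (i) is excluded directly by the time-boundedness of the UP and the finite bounds just recalled. For case (ii), I would argue that the retransmission logic captured by Eq. \ref{eq:pbft-delay-rt}, with each repeated stage costing $\tau_d$, combined with the bounded per-stage delay, ensures that any correctly-chosen timeout eventually exceeds the true bounded delay, so $A$ ceases to restart spuriously; the bounded fault fraction (e.g.\ $f = \frac{n-1}{3}$ for the PBFT instantiation of Eq. \ref{eq:pbft-delay}) then leaves enough correct nodes to drive each stage to its $g_i$ threshold, so every sub-round completes in bounded time and the process terminates, contradicting the assumption.

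The main obstacle I anticipate is the node-departure caveat already flagged in the safety proof: a node leaving the network can trigger a timeout-induced restart, and one must rule out \emph{infinitely many} such restarts, since bounded per-attempt delay alone does not yield termination if restarts recur forever. I would handle this by invoking the standing fault assumption — the number of faults is bounded, and for BFT-type $A$ stays below the $f$ threshold — so that after finitely many departures the membership stabilizes; from that point the bounded-delay UP supplies a genuine synchronous window long enough for $A$ to complete a full $k$-stage process and decide. Making this ``eventually stable, eventually long-enough window'' argument rigorous, rather than merely asserting that asynchrony is bounded, is where the real work lies; everything else reduces to the finite-delay bounds already derived in Section \ref{sec:passive-mode}.
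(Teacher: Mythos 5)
Your proof follows the same route as the paper's: a proof by contradiction in which the time-boundedness of the UP forces eventual synchrony, under which algorithm $A$ recovers its native progress and termination guarantee. The difference is one of rigor rather than strategy, and it favors your version. The paper's argument is purely qualitative: it asserts that asynchronous periods are ``eventually limited and resolved'' and that any liveness violation ``must result from message delays \ldots which will eventually be resolved,'' without ever invoking the delay analysis it developed earlier; this comes close to assuming what is to be proved. You instead anchor the synchrony claim in the quantitative machinery of Section~\ref{sec:passive-mode} (per-sub-round delays governed by order statistics of a variable supported on $\left[T_{min}, T_{max}\right]$, hence total process delay at most $T_{max}k$), and you make the case split explicit: either the UP violates its bound, which is excluded by hypothesis, or the timeout and retransmission logic of $A$ misfires, which is excluded because a bounded true delay means correctly chosen timeouts eventually stop firing spuriously. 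Most importantly, you confront the restart caveat that the paper flags only in its safety proof (``except if a node leaves the network causing the consensus algorithm to restart'') and silently drops in the liveness proof: bounded per-attempt delay does not yield termination if timeout-induced restarts recur forever, so one must additionally argue that membership changes are finite (via the bounded-fault assumption, e.g.\ $f = \frac{n-1}{3}$ for PBFT) and that a post-stabilization window long enough for a full $k$-stage process exists. Your recognition that this ``eventually stable, eventually long enough window'' step is where the real work lies identifies precisely the gap in the paper's own proof, and makes your proposal the more complete argument.
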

\begin{proof}
To demonstrate liveness preservation, we assume, for the sake of contradiction, that algorithm A violates its liveness property when executed on the time-bounded UP. This means there exists an execution in which no correct node ever decides on a value for a valid consensus instance despite all correct nodes proposing values.

Again, due to the time-bounded nature of the underlying UP protocol, asynchronous periods are eventually limited and resolved. During synchronous periods, algorithm A behaves as if running in a completely synchronous system, where it guarantees progress and termination.

Since synchronous behavior is guaranteed by the time-bounded UP, we can conclude that algorithm A will eventually decide on a value during synchronous periods if all correct nodes propose values.

Thus, any violation of liveness during asynchronous periods must result from message delays, such as larger VEE queues and the temporary presence of asynchrony, which will eventually be resolved.

Therefore, algorithm A preserves its liveness property when running on the time-bounded UP.
\end{proof}


\subsection{Token}\label{sec:token-module}

The Token module handles the token trading features of the system, such as cryptocurrency trading, enabling the exchange of services among stations. Such services can include for instance Electronic Tolling Collection (ETC) and benefits in cooperative maneuvers through negotiation. While primarily designed for the token of the VERCO architecture, this module can also be easily adapted to work with other types of cryptocurrency. 

In the VEP implementation, we assume that the token-related identity is not bound to the vehicular identity, as it can raise privacy concerns. The Token module must have an interface to the Security layer/service to request the signing of token-related data. The Security layer/service must then be pre-loaded with the required cryptographic keys associated with the possible DLTs to be used. 

The Token module supports two different transaction mechanisms, the \textit{offer} and the \textit{promise}. 

\subsubsection{Offer}
The offer mechanism is similar to common cryptocurrency transactions, where the holder of the tokens creates a single transaction comprised mainly of the number of tokens to transfer, the output address(es) to transfer the tokens to, and its own signature, allowing the transfer. This mechanism easily supports popular DLTs such as Bitcoin \cite{bitcoin} or Ethereum \cite{ethereum}, assuming that the transaction complexity (e.g., number of outputs), is kept low enough so that the transaction size can fit in the packet to be transmitted using the access technology (ITS-G5, C-V2X, etc.). This is feasible, assuming a few hundred bytes per transaction.

\subsubsection{Promise}
The promise mechanism provides a way to only transfer tokens if an event is successful. The transaction is divided into two stages, starting with the token proposal and finishing with the verification of the event. In the first stage, the token holder issues the first part of the transaction containing the offer, and optionally the output addresses, and the signature. If the holder chooses to omit the output addresses, then the to-be token receivers can issue VEEs containing the output addresses, signed, to be appended to the first part of the transaction. If this is the case, the SP must correctly identify who participates in the event to avoid any nearby vehicles not relevant to the event issuing their own output address and receiving unwarranted tokens. This scheme is advantageous since the token holder may not be aware of the addresses of the recipients, and it also provides the freedom for the recipients to choose which address to receive tokens on. In the second stage, the event is complete, and the participants issue VEEs verifying if it finished successfully or not, creating the second part of the transaction allowing for the token transfer if the majority agrees that the event indeed finished successfully. The promise mechanism requires that the used DLT is programmed to understand this double-stage transaction, such as through the use of smart contracts, in order to allow the correct transfer of tokens.

\section{Envisioned Sub-Protocols (SPs)}\label{sec:sps}

This subsection illustrates practical applications of VEE that enhance or add various features to the vehicular network. We present three specific SPs, each centered around a previously described module, and discuss the possibility of integrating multiple modules. The three SPs covered in this section are the Maneuver SP, the View SP, and the Tolling SP.
The Maneuver SP demonstrates how VEE, primarily utilizing the Ledger module, can bolster cooperative maneuvers' data security.
The View SP explains how VEE, utilizing the Consensus module, can be used to establish a distributed approach for vehicles to improve their system-wide visibility or verify the state of the system.
Lastly, the Tolling SP showcases how VEE, utilizing the Token module, can facilitate the instant execution of road toll payments.

\subsection{Maneuver SP}\label{sec:maneuver-sp}

ETSI ITS Working Group 1 (WG1) is currently working on standardizing a vehicular communications protocol for coordinated maneuvers named Maneuver Coordination Service (MCS) \cite{mcs-standard}. It is based on a periodic sharing of the planned future trajectory by the vehicles through messages named Maneuver Coordination Messages (MCMs). Using this protocol, vehicles can also request certain maneuvers (by defining the desired trajectory) through a request-response mechanism. In this protocol, three different types of vehicles can be identified: the subject vehicle (SV), which requests a maneuver using a MCM.request; the target vehicle(s) (TV) participating in the maneuver, which either accept or reject the request with MCM.response(s); and the relevant vehicles (RVs) which drive nearby without participating in the maneuver.

Coordinated maneuvers, however can fail, and related data should be recorded in order to find out what went wrong and who (or what) should be held accountable.
Here, we propose an SP for coordinated maneuvers to increase related data security. This SP employs the Ledger module, with optional Token and Consensus modules. VEs of this SP are composed of at least two VsEs: the maneuver coordination, the maneuver verification, and optionally more VsEs with extra maneuver data.

During maneuver coordination (a.k.a., agreement seeking or maneuver negotiation), the SV issues a MCM.request extended with the VEE with the Ledger module. The Ledger module includes the ledger ID associated with the localchain of that geographic region and a previous block hash which the vehicle has data. Upon receiving the MCM.request which contains the station IDs of all TVs, these reply with a MCM.response with a VEE with the Ledger module, with the same localchain ID as the request and a previous block hash equal or different to the request. Upon receiving all messages (MCM.request and MCM.responses), maneuver participants and nearby stations create a block with all the exchanged MCM.request/response(s) messages in this first VsE. The order of the messages in the block is ascending, according to the station IDs of the maneuver participants. A user can only create a block if it has received every message (one MCM.request plus one MCM.response per TV). 

After the maneuver concludes, it is verified by both the SV and all TVs identified by it, issuing a VEE with the Ledger module linking the previous block and the informational flag set to determine if the maneuver was executed successfully or not. A second block is created using all the exchanged VEE-extended messages in this second VsE. These VEEs can be issued on top of any ITS message using the passive mode. The order of the messages in the block is in an ascending manner, according to the station IDs of the maneuver participants. A station can only create a block if it has received every message (one extended ITS message per SV/TV). 

A representation of a cooperative maneuver event can be visualized in Figure \ref{fig:maneuver-sequence}.

\begin{figure}
    \centering
    \includegraphics[width=1\linewidth]{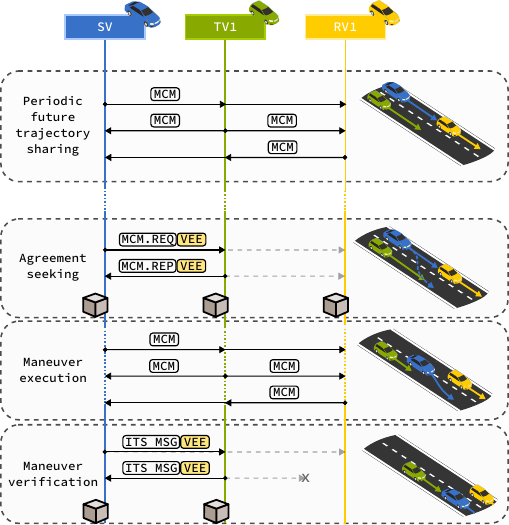}
    \caption{Cooperative maneuver SP example based on ETSI's Maneuver Coordination protocol. Vehicles use VEE to seamlessly record the negotiated maneuver data in a localchain block during the agreement seeking stage. After maneuver execution, vehicles verify if the maneuver was concluded successfully, generating another localchain block. Subsequent optional consensus processes to record additional maneuver data are not shown.}
    \label{fig:maneuver-sequence}
\end{figure}

The Token module (in combination with the Ledger module) can also be employed to provide a transactional feature between vehicles for maneuver advantages or simply as a gamification feature. Here, the promise mechanism of the Token module is employed. 
In the first VsE, the MCM.request includes, in addition to the Ledger container, the Token container defining the amount of tokens to transfer, equally distributed, to the TVs. 
The TVs willing to receive tokens issue the VEE of the MCM.response, with the Token container also included, containing the output address to which tokens shall be transferred. 
In the second VsE, both the SV and TVs verify if the negotiated maneuver concluded successfully through the previously mentioned extended ITS messages.
If they all agree then the token transfer is authorized. Eventually, this SP definition can change regarding which stations participate in the maneuver verification in order to exclude the SV in this VsE. This considers possible malicious SVs that promise tokens but then unverify successfully performed maneuvers. In this case, the total number of maneuver participants should be larger than two.

The Consensus module, combined with the Ledger module, can also be employed to increase the amount of data safeguarded in the ledger after the first two VsEs. The additional data to be recorded in the ledger could be messages exchanged deemed important during maneuver execution regardless of whether it finished successfully or failed. As in the previous VsEs, any data proposed here must be correctly signed. 
As mentioned, this  can include previous ITS messages, or the VEE-extended ITS message issued by the consensus process initiator containing the proposed data.
This consensus process could be initiated by any of the maneuver participants, other relevant vehicles, or onlookers like RSUs. A BFT consensus algorithm can be advised to account for possibly misbehaving users who may incorrectly disagree with the data proposed (e.g., data is proof against them). This consensus process runs after the maneuver verification, it is associated with an additional VsE, and creates a new block (not pictured in Figure \ref{fig:maneuver-sequence}).

The data constituting each VsE and the messages exchanged to create the VsEs in the Maneuver SP are summarized in Table \ref{table:maneuver-sp}. Optional modules and VsEs are inside parentheses.

\begin{table}[]
\centering
\caption{Maneuver SP VE format and associated messages.}
\begin{adjustbox}{width=\columnwidth,center}
\begin{tabular}{lll}
\hline \hline
VsEs &  Data & Messages exchanged \\ \hline
\multirow{2}{*}{\begin{tabular}[c]{@{}l@{}}VsE$_1$: \\ Agreement seeking\end{tabular}} & \multirow{2}{*}{\begin{tabular}[c]{@{}l@{}}Same as \\ exchanged messages\end{tabular}} & \multirow{2}{*}{\begin{tabular}[c]{@{}l@{}}MCM.request, MCM.responses \\ extended w/ VEE Ledger + (Token)\end{tabular}}   \\  & &   \\ \hline
\begin{tabular}[c]{@{}l@{}}VsE$_2$: \\ Maneuver verification\end{tabular}                  & \begin{tabular}[c]{@{}l@{}}Same as \\ exchanged messages\end{tabular}           & \begin{tabular}[c]{@{}l@{}}ITS messages \\ extended w/ VEE Ledger + (Token)\end{tabular}    \\ \hline
\begin{tabular}[c]{@{}l@{}}(VsE$_n$):\\ Proposed extra data\end{tabular}  & {\begin{tabular}[c]{@{}l@{}}ITS messages  and/or \\ other signed data\end{tabular}}
         & \begin{tabular}[c]{@{}l@{}}ITS messages \\ extended w/ VEE Ledger + Consensus\end{tabular} 
\\ \hline \hline
\end{tabular}
\end{adjustbox}
\label{table:maneuver-sp}
\end{table}

\subsection{View SP}\label{sec:view-sp}

A road user's perspective or vision over the vehicular environment is mostly composed of the information a) perceived by its own sensors or b) collected through vehicular communications from other road users. Such information is usually stored in conceptual data stores such as a Local Dynamic Map (LDM) \cite{ldm-standard}, combining the data of several vehicular protocols. Here we refer to a road user's perspective over its surroundings and system state as the \textit{view}.

The View SP provides a way for a station to verify its current view by harnessing the consensus feature of VEE. A road user can query other road users with questions about the current system state through a distributed verification methodology, and update its view accordingly.

This SP employs the Consensus module. Optionally, it can also include the Ledger module. Each VE of this SP contains at least one VsE, associated with the consensus process of the view update. More VsEs can be included with additional information or related view updates, each associated with a consensus process.

The initiator of the consensus process, proposing the view update, is known as the proposer. The proposer determines the node membership of the consensus process based on the proposal's characteristics and their relationships with the various road users as perceived by the proposer. For example, the proposer may prioritize selecting vehicles that likely have visual information about the proposal, as this could be beneficial in the decision-making process. Eventually, mechanisms like proof-of-eligibility \cite{liu19} or proof-of-relevance \cite{cao08} could be employed. These mechanisms aim at providing a method of only allowing road users which may have knowledge or are relevant to a vehicular event, to participate in the consensus process. 

If the same instance or event triggers several vehicles, and each starting a consensus process around the same time, the first process (with the lowest timestamp value in the first message associated with the process) takes precedence.

If the view update information is deemed important, the Ledger module can be used to increase the data security of this information. The proposer, in addition to proposing the data associated with the view update, also proposes the ledger-related information. If the consensus process finishes successfully, a localchain block is created with the message containing the proposed data, which the proposer used to initiate the consensus process. If previous exchanged ITS messages are referenced in the proposed data, these too should be recorded. It is advised to record only signed data in order to add authenticity and safeguard against malicious data forging.
Nearby stations listening to but not participating in the consensus process can also append the agreed block to their localchain instance.

The messages used to create the VsEs of the View SP are summarized in Table \ref{table:view-sp}. Optional modules and VsEs are inside parentheses.

\begin{table}[h]
\centering
\caption{View SP VE format and associated messages.}
\begin{adjustbox}{width=\columnwidth,center}
\begin{tabular}{lll}
\hline \hline
VsEs &  Data & Messages exchanged \\ \hline
\begin{tabular}[c]{@{}l@{}}(VsE$_1$):\\ Proposed view\end{tabular}  & {\begin{tabular}[c]{@{}l@{}}ITS messages and/or \\  other signed data\end{tabular}}
         & \begin{tabular}[c]{@{}l@{}}ITS messages extended w/ \\  VEE Consensus + (Ledger) \end{tabular} \\ \hline
\begin{tabular}[c]{@{}l@{}}(VsE$_n$):\\ Additional proposed data\end{tabular}  & {\begin{tabular}[c]{@{}l@{}}ITS messages and/or \\  other signed data\end{tabular}}
         & \begin{tabular}[c]{@{}l@{}}ITS messages extended w/  \\ VEE Consensus + (Ledger)\end{tabular} 
\\ \hline \hline
\end{tabular}
\end{adjustbox}
\label{table:view-sp}
\end{table}

Figure \ref{fig:view-sequence} presents an example of the View SP process. Here, a vehicle issues a DENM message stating the presence of an animal on the road. Another vehicle does not see the animal's presence at the reported location and remains dubious about the report. It then asks other vehicles which can possibly verify its presence. Through a consensus process, the quorum reaches a decision - the animal does not exist - and these issue a DENM negating the original DENM. Although not shown, the Ledger module could be included in the message exchange by V1, creating a block with the information of the consensus process proposal at the end of the consensus process. After the process, the quorum could also report the possible misbehaving vehicle, which reported the non-existent animal to a misbehavior authority (MA) \cite{ma-standard}.

\begin{figure*}
    \centering
    \includegraphics[width=1\textwidth]{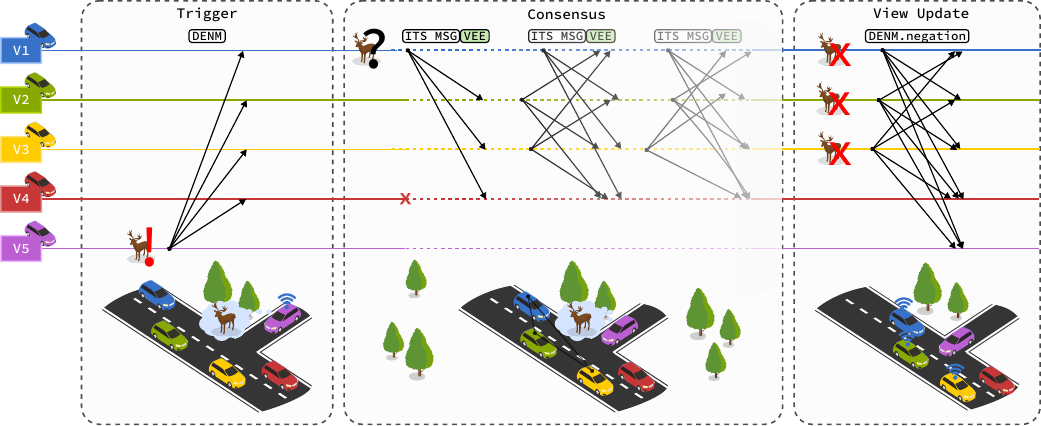}
    \caption{View SP example. V5 first issues a DENM.\textit{animalOnTheRoad}. V1 is doubtful and selects a group of vehicles that can possibly verify the situation. A fault-tolerant consensus algorithm is used using standard ITS messages as a transport medium. V4 is faulty and does not participate in the process. After reaching a consensus, the quorum finds no evidence of an animal and issues a negation DENM.}
    \label{fig:view-sequence}
\end{figure*}

\subsection{Tolling SP}\label{sec:tolling-sp}

Currently, the only standardized protocol for V2X-based toll usage is defined by SAE \cite{sae-j3217}. This protocol uses RSUs to periodically broadcast Toll Advertisement Messages (TAMs) encompassing information about the toll operator, the toll zone, and fees for each toll. Whenever a vehicle finds itself inside the toll-defined zone, it issues a Toll Usage Message (TUM) encompassing client-related information, encrypted (ECIES encryption scheme) using the RSU's provided public encryption key. The RSU then processes the TUM and issues an encrypted Toll Usage Message Acknowledgement (TUMack), confirming or rejecting the vehicle's TUM.

Here, this SP adds a transactional feature based on cryptocurrency through VEE's Token module. One VsE is associated with this SP, related to the performed transaction. The Token module's offer mechanism is used.
Both the TUM and TUMack are extended with the VEE. 
The messages used to create the single VsE of the Tolling SP are summarized in Table \ref{table:tolling-sp}. 
A representation of the toll usage protocol is presented in Figure \ref{fig:toll-sequence}.

\begin{table}[h]
\centering
\caption{Tolling SP VE format and associated messages.}
\begin{adjustbox}{width=\columnwidth,center}
\begin{tabular}{lll}
\hline \hline
VsEs &  Data & Messages exchanged \\ \hline
\begin{tabular}[c]{@{}l@{}}(VsE$_1$):\\ Toll payment\end{tabular}  & \begin{tabular}[c]{@{}l@{}}Same as \\ exchanged messages\end{tabular}
& \begin{tabular}[c]{@{}l@{}}TUM, TUMack \\ extended w/ VEE Token \end{tabular} 
\\ \hline \hline
\end{tabular}
\end{adjustbox}
\label{table:tolling-sp}
\end{table}

\begin{figure}
    \centering
    \includegraphics[width=1\linewidth]{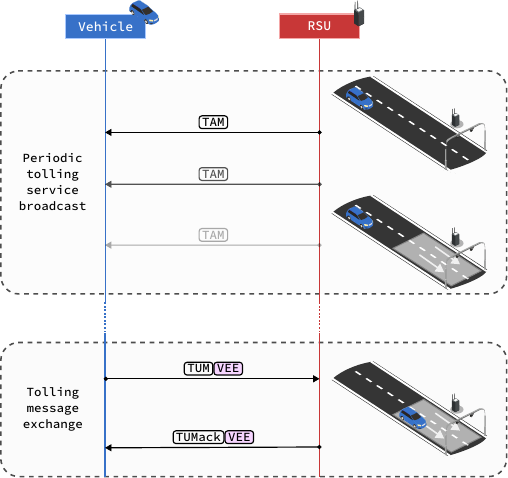}
    \caption{V2X-based fee collection SP. VEE is added to provide a transactional feature among stations, enabling the vehicle to perform the due payment on the spot.}
    \label{fig:toll-sequence}
\end{figure}

Given that the TUM contains the client's identity, only the exchanged VEEs, without the ITS message data, can be extracted from the exchanged packets to preserve the client's privacy. The VEE's Token module signed data allowing the transfer of tokens can then be forwarded to the blockchain associated with the transaction by either the OBU or RSU.

Given that the TUM-TUMack exchange can fail, it is normal for the OBU to keep retransmitting the TUM until it receives the TUMack from the RSU. Due to the possible retransmissions, it is necessary for the signed data included in the VEE not to change across retransmissions and not to trigger possible different transactions. If the signed data is required to change, the blockchain should implement a transaction rate limit mechanism, either through, for example, not accepting different transactions with the same ID/nonce, or if timestamps are included, not accepting transactions coming from the same sender spaced less than some time delay apart. 

The output address, that is, the address owned by the tolling service provider, for which the OBU must transfer the due fee, must be either advertised in the TAM or previously communicated/fetched through a secure ``offline'' method. If there is a fear that the RSU could be exploited by a malicious actor, e.g., installed in an easily reachable location, the service provider should prefer the latter method in order not to risk misappropriated funds.

\section{Simulation Experiments}\label{sec:simulations}

The three previously described SPs were implemented and analyzed using hardware-in-the-loop (HiL) based simulations to evaluate the performance of VEE and its modules. For each SP, only the non-optional VEE modules were employed, ultimately having simulations focused on the performance of each specific module.
The simulation setups are first described, followed by the obtained measurement results.

\subsection{Setups}

ETSI ITS stations were used in these HiL-based simulation setups. Instead of providing these platforms with the usual global navigation satellite system (GNSS) signal, the stations' positions and dynamics were simulated by feeding the protocol stack with a path composed of WGS 84 coordinates. Each platform calculates its current position, speed, and planned trajectories as the simulations run. Upon reaching the end of the path, the position is reset to its beginning. The setups are run in real time, and no other features nor operations besides the vehicle dynamics are simulated.

The required changes to the protocol stack were made as mentioned in Section \ref{sec:system-model}. That is, the ITS service at the Facilities layer, after generating each message, issues a request for any extension to the Applications service before forwarding the extended (or not extended) message to the lower layers of the stack.

A highway lane merging scenario was employed in the simulations. Two paths were used: the highway path and the on-ramp path. Only the Cooperative Awareness (CA) basic service, responsible for generating CAMs, was used as a UP to transmit passive operation mode VEEs. The distribution of the CAM generation period is shown in Figure \ref{fig:up-period}. The average CAM transmission periods are $\mu=881.48$ ms for the highway path and $\mu=591.22$ ms for the on-ramp path. The on-ramp path has a higher frequency of CAM transmission due to the higher curvature of the road. Throughout the simulations, vehicles are systematically assigned to one of the two paths, evenly distributed, and always within communications range.

\begin{figure}
    \centering
    \includegraphics[width=1\linewidth]{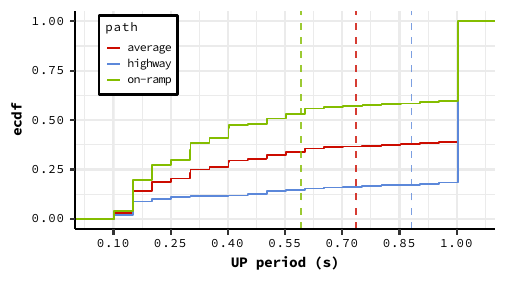}
    \caption{Period $T$ distributions of the underlying protocol (UP) used in the VEP passive operation mode. The UP is the CA basic service (CAM generator). The CAM period distributions for the two used paths, and their average, in the used highway lane merging scenario, are shown. Vertical lines show average values.}
    \label{fig:up-period}
\end{figure}

To analyze various performance parameters, three setups (A, B, and C) were used to analyze the SPs. A representation of these are presented in Figure \ref{fig:setup}.

\begin{figure}
    \centering
    \includegraphics[width=1\linewidth]{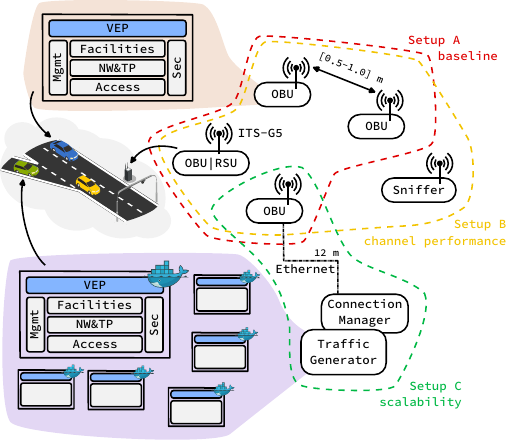}
    \caption{Three setups (A, B, and C) used in the simulation experiments.}
    \label{fig:setup}
\end{figure}

Setup A (baseline) focuses on analyzing the performance of the proposed design for a small number of stations, with the OBUs/RSUs running on common ITS station hardware. The time-based performance delays of the communication protocols and SPs are measured.

Setup B (channel performance) focuses on the overhead of the VEE on the ITS-G5 communications channel through measurements of the channel busy ratio (CBR) using a configuration similar to setup A. Here however, the employed communication protocols were modified to increase the VEE traffic rate in order to better study its overhead in higher saturation scenarios. The CBR is calculated by determining the intervals when the communications channel is busy, divided by the total time taken for the measurement. A Linux-based network sniffer with an Atheros AR9280 chipset was used to measure this value. It was tuned into the used ITS-G5 communications channel. The measurements of the CBR were taken by analyzing the busy time output of the \verb|iw| utility. The CBR values of simulation runs employing the SPs (providing the VEE-extended traffic) were compared with the CBRs of simulation runs not employing the SPs (no extended messages).
Decentralized Congestion Control (DCC), an algorithm to schedule messages in high traffic scenarios, was not in effect as the CBR remained far below the 30\% threshold \cite{dcc} in all simulations due to hardware limitations (number of units) to generate higher traffic.

Setup C (scalability) focuses on the scalability of the proposed design for a variable number of stations, using one OBU running on common hardware mixed in with a traffic generation system (TGS) running Docker containers simulating vehicles  ($[$3-27$]$). 
Due to the difference in performance between the OBU and the TGS, the later was not employed in the previous setups in order not to manipulate the measured delays. 
Therefore, this setup was only applied to the View SP since it is the SP which mostly uses the VEE passive operation mode.
The delay of SPs relying on this operation mode is less dependent on the computational performance and more on the frequency of the UP, as assumed in \circled{1}. This setup can also control the packet delivery ratio (PDR) among stations, by discarding the transmission of randomly chosen messages to randomly selected stations. The PDR (\%) is the ratio,

\begin{equation}
    PDR = \frac{N_{rx}}{N_{tx}} \times 100
\end{equation}
where $N_{rx}$ is the number of received packets, and $N_{tx}$ is the number of transmitted packets. A simple Bernoulli distribution was used, where packets on the reception side had a certain chance of being ignored. This laboratorial setup allows for a more meticulous analysis of the behavior of the proposed system under specific packet loss values when compared to a more dynamic setup considering the various distances among vehicles or other factors that may affect the PDR.

Only the View SP was tested under such circumstances (setup C) since it is the only protocol that implements packet retransmission (through PBFT) in its consensus process. In the other SPs, if there is any packet loss, the associated UP processes simply restart from the beginning due to their lower message complexity (request-response). As such, the overall average delay of such processes can be easily approximated by multiplying the average delay of the optimal scenario by the number of times it retries. 

While setups A and B solely use ITS-G5 wireless communications with the stations located less than 1 meter apart from each other, setup C also employs Ethernet-based communications with a total cable length of a dozen meters. In the latter, the OBU and all the virtualized OBUs are connected to a connection manager, distributing the packets among them and controlling the PDR. This connection manager runs on the same machine as the traffic generation system, using the same emulation framework employed in \cite{verco}.

Hardware-wise, the ETSI ITS stations (the OBUs and RSU) comprise a host PC Engines APU3D4 with a 4-core AMD GX-412TC CPU @ 1.4 GHz with 4GB RAM, running Arch Linux x86-64. The ITS-G5 protocol stack and all associated packet-handling operations are run on these hosts. Several services (C programs) compose the protocol stack with communications established between them using ZeroMQ inter-process communication (IPC) connections. To these hosts, ITS-G5 communications are either provided by a u-blox VERA-P1 module or a Compex WLE200NX module. The communications bitrate is set to the ITS-G5 default of 6 Mbps. Every exchanged message is signed using the algorithm ECDSA on the 256-bit Brainpool R1 curve. 
The network sniffer shares the same hardware specs as the ETSI ITS stations.
For the traffic generation system, both the virtualized OBUs (Docker containers) and connection manager are running in a 8-core Intel Core i9-9900K CPU @ 5.0 GHz with 32 GB RAM running Arch Linux x86-64.

\subsubsection{Maneuver SP}
A maneuver coordination protocol following current standardization efforts was employed, as described in Section \ref{sec:maneuver-sp}. Since this work focuses on the performance of the communications, the vehicle dynamics were largely simplified.

Vehicles coordinate maneuvers in the highway lane merging scenario. A slightly modified version of the under-standardization MCMs was used to only include relevant fields. The vehicles periodically broadcast MCMs, including their planned trajectory over the future 5 seconds, driving at a default speed of 90 km$^{-1}$. 
In this setup, upon a trajectory overlap (collision risk) with other TV(s), the vehicle that first detected it (the SV) issues a VEE-extended MCM.request for a trajectory equal to its currently planned/desired trajectory. TVs then respond to the SV, accepting the request by issuing a VEE-extended MCM.response. 
After the response, depending on which vehicle (the SV or TV) reaches the intersection point faster, the TVs either accelerate or decelerate their current speed by 10\%. After reaching the (acc/de)celerated target speed, the vehicle (ac/de)celerates back to the default speed. 
A timeout mechanism is implemented to safeguard against communication failures, starting when the SV issues a MCM.request. This mechanism ensures that the SV does not indefinitely wait for maneuver responses.

The first VsE ledger block is created upon receiving the MCM.request message and all MCM.responses. Five seconds later, the SV and TVs verify the execution of the maneuver by issuing VEE-extended ITS messages (here CAMs), and create the second VsE ledger block. 

Upon the simulation's initial state, each vehicle stores only one block, the genesis block, in their system's singular localchain instance. As the simulation progresses, vehicles gradually build up their localchain by adding blocks as multiple maneuvers occur.

Two out of the four OBUs were running in setup A, each following a different path of the lane merging scenario. In setup B, four OBUs were running, constantly performing the maneuver coordination protocol, i.e., the maneuver request-response, regardless if a trajectory overlap is detected or not. In these setups, the OBUs exchange CAMs and MCMs. 

\subsubsection{View SP}
These simulations focus on the performance of the consensus algorithm, which should account for the large majority of the overall delay of the View SP. We employ the Practical Byzantine Fault Tolerance (PBFT) consensus algorithm here, one of the most popular consensus algorithms. The trigger is employed using a simple time-based function, which triggers the consensus process every $P$ milliseconds. This function cannot initiate a new consensus process if an ongoing consensus process exists. In other words, at most one consensus process can be running at any given time. Across simulations runs, only one specific station could initiate the consensus process, in order to further strengthen this rule. The proposed data by this station/primary node is a buffer of random data, 32 bytes in length.

All nodes are honest and accept every proposal. No faults were introduced except for the random packet loss in setup C. A retransmission timeout value of $\tau_d=2200$ milliseconds was used. This value was chosen as twice the $T_{max}$ (1000 ms), plus a leeway for computational time. A maximum limit of 5 retransmissions per station per consensus process, and an upper limit of 30 seconds per consensus process were also set to safeguard against unreasonable lengthy processes. Failing to reach a consensus, i.e. receive $2f$ commit messages before the mentioned limits are surpassed, the station considers the process a failed process. If a consensus is reached, the PBFT process is deemed successful.

All $n$ stations participate in every consensus process. For setups A and B, $n=4$ OBUs were used. In setup A and C, the trigger period $P=30000$ ms (30 seconds), while in setup B it was varied in order to evaluate the CBR and protocol overhead variation. For setup C, $n=\{3,6,...,27\}$ virtualized OBUs were used, plus one running on actual OBU hardware (primary node). Results were measured from the viewpoint of the latter.  In these setups, the OBUs exchange only CAMs.

\subsubsection{Tolling SP}
To analyze the Tolling SP, an RSU serving as a service provider and vehicles acting as clients were used in the simulations. The RSU is constantly broadcasting toll advertisement messages containing information about a single toll zone. Upon crossing the toll zone, the vehicle issues a VEE-extended TUM with the Token container comprising the offer mechanism. This Token container includes a VERCO transaction, containing an ECDSA 256-bit signature (same level of cryptography as used in Bitcoin and Ethereum transactions) of the concatenated data of the nonce and transaction data (tokens offered). Appended to the signature, a certificate associated with a pseudonym holding the tokens in the VERCO ecosystem is also included. Upon receiving the TUM, the RSU verifies and responds with a VEE-extended TUMack including the Token module, with a confirmation code of the proposed cryptocurrency-based transaction, and also with a signature appended with the respective certificate. Both the TUM and TUMack are signed and encrypted at a lower level of the protocol stack using the ECDSA and ECIES protocols, also using 256-bit security.

While SAE J3217 \cite{sae-j3217} was defined to operate over the IEEE WAVE protocol stack, here it was implemented over the ETSI ITS protocol stack with different routing/networking protocols and slightly modified toll messages though with equivalent features. Instead of using the WAVE Service Advertisement (WSA) \cite{ieee1609.3} protocol to disseminate TAMs and advertise the tolling zones, the Services Announcement (SA) \cite{saem-standard} protocol was used to disseminate SAEMs containing this information. Instead of using the WAVE Short Messaging Protocol (WSMP)  \cite{ieee1609.3} to provide networking and transport features to the TUM/TUMack messages, the GeoNetworking (GN) \cite{gn-standard} protocol and the Basic Transport Protocol (BTP) \cite{btp-standard} were used to provide these features.

For setup A, only one OBU and one RSU were running. A single path was utilized in this setup, along which the vehicle takes one minute to complete a full run. The toll zone was positioned at the midpoint of the path. 
For setup B, one RSU and three OBUs were running. Here, the OBUs triggered the toll usage protocol by issuing at TUM at fixed time intervals, regardless if they were inside the toll zone or not. The RSU can handle several tolling requests simultaneously.
In these setups, the stations exchange CAMs, SAEMs (RSU), TUMs (OBUs) and TUMacks (RSU). 

\subsection{Results}

The delay and channel load performance of the different SPs in the previously described setup are presented below. Results are shown in milliseconds (ms) or in hertz (Hz). The use of  empirical cumulative distribution functions (eCDFs) is largely employed to represent the measured results in setups A and C (performance delays). For setup B (channel load), measured results are simply plotted over the associated protocol trigger rate values.

\subsubsection{Maneuver SP}

The measured delays of the Maneuver SP for 2 vehicles are shown in Figure \ref{fig:maneuver-delay} (setup A). The delays of both VsEs (maneuver coordination and maneuver verification) are presented. The theoretical expected delay of the verification is also shown, calculated using Equation \ref{eq:delay-ksrounds} for $k=1$ sub-round, and $m=2$ messages exchanged and $g=2$ required messages, using the $W$ distribution. 

Four delays were measured: the agreement seeking/maneuver coordination message exchange delay (``coordination'' on Fig. \ref{fig:maneuver-delay}) , the maneuver verification delay (``verification''), the VEE overhead delay (``vee''), and localchain block creation delay (``block forge''). The coordination delay, without the VEE overhead, is also shown (``req-rep''). 
The maneuver coordination message exchange delay was measured starting at the instant an SV detects a trajectory overlap with a received MCM and generates a VEE-extended MCM.request, up until it receives the VEE-extended MCM.response message issued by the TV. This delay includes the overhead introduced by the VEE, both on the SV and TV sides. The maneuver verification delay starts 5 seconds after the maneuver coordination block creation, and finishes when all OBUs participating in the maneuver (2 here) issue a VEE-extended CAM to verify the maneuver. This delay includes the overhead introduced by the VEE by just one vehicle since this process is concurrent among all vehicles. The localchain block creation delay is the time it takes for the station to create a block (hash using SHA-256 of all the coordination/verification messages) and add it to its own instance of the localchain. This process happens after both the agreement seeking and maneuver verification and does not directly affect the safety-critical delay of the related UPs.

\begin{figure*}
    \centering
    \includegraphics[width=1\linewidth]{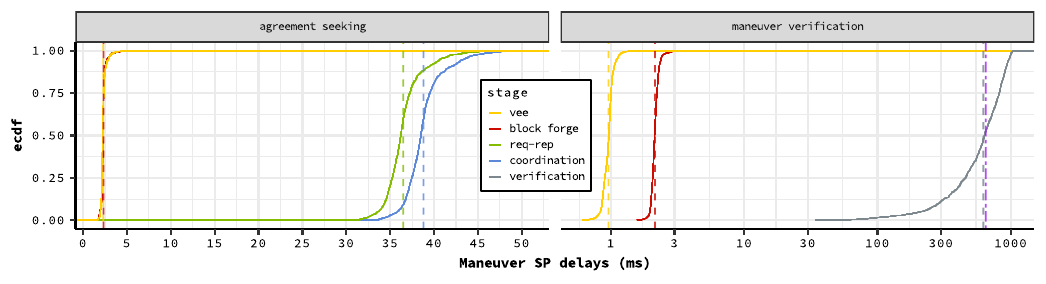}
    \caption{Maneuver SP measured delays for 2 vehicles coordinating $N=1000$ maneuvers (setup A). Vertical lines represent average values. A logarithmic scale is used for the maneuver verification delays. The theoretical expected delay of the verification is also plotted (vertical purple line).}
    \label{fig:maneuver-delay}
\end{figure*}

The average maneuver coordination delay was $\mu = 38.79$ ms with a standard deviation of $\sigma = 2.36$ ms. This value includes the delay introduced by the VEEs on the MCM messages which was $\mu = 2.31$ ms with $\sigma = 0.35$ ms, including both the request and response. 
This corresponds to a time overhead of $+2.97\%$ introduced by VEE in the maneuver coordination protocol. 
For the maneuver verification, the average verification delay was $\mu=622.03$ ms with $\sigma= 235.15$ ms. 
This delay includes the delay introduced by the VEE on the concurrent CAM messages which was $\mu = 0.96$ ms with $\sigma = 0.09$ ms. The theoretical expected delay of the verification is $E(W^2_{(2)})= 648.48$ ms, a little higher when compared to the measured results due to the skewing of the UP period $T$ (and $W$) towards lower values as the vehicles coordinate maneuvers, changing speeds thus leading to an higher CAM frequency. The average block creation delay for both VsEs was $\mu=2.21$ ms with $\sigma=0.43$ ms.

To analyze the performance of the ITS-G5 communications channel, setup B was used with 4 OBUs running, where one was constantly requesting maneuvers from the other three. MCM.requests were issued with a period ranging [100-20000] ms. 
The observed packet size ranged between [92-692] bytes for the normal traffic, and [92-735] bytes for the traffic with the Maneuver SP. 
The sniffer measured the CBR over a timeframe of $2500$ seconds for each period value and for each traffic type.
The radio channel performance results are shown in Figure \ref{fig:ledger-cbr}. 
Results show a lower average VEE overhead for larger MCM.request trigger periods as the system handles a less amount of maneuver coordination processes, equivalent to less extended MCM.requests/responses being exchanged. For smaller MCM.request trigger periods, the exchanged VEE-extended MCM.requests/responses and verifying CAMs dominate the network traffic, transpiring an higher average VEE overhead.
The VEE overhead varied from  $+7.75\%$ to $+0.21\%$. These results show the low impact of the VEE for data security in cooperative maneuvers, even in scenarios requiring an high abnormal amount of coordination processes involving several vehicles.

\begin{figure}
    \centering
    \includegraphics[width=1\linewidth]{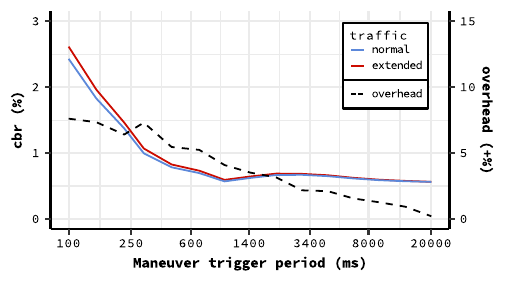}
    \caption{Channel load of $n=4$ vehicles performing coordinated maneuvers. All vehicles participate in every maneuver. The maneuver coordination protocol is triggered periodically ($x$-axis). Both normal (non-extended) and extended (Maneuver SP) traffic is shown. The Maneuver SP overhead scale follows the right $y$-axis. Each measured value corresponds to an observation time of 2500 seconds.}
    \label{fig:ledger-cbr}
\end{figure}


\subsubsection{View SP}
The measured delays of the PBFT-based View SP for $N=1000$ processes and $n=4$ stations are shown in Figure \ref{fig:pbft-delay} (setup A). The average expected delay calculated from Equation \ref{eq:passive-delay} is also plotted. 
The overhead of VEE is not shown as it is minimal ($\approx 1$ ms) when compared with the measured average PBFT process delays.
The PBFT process delay measurement starts when the station initiates (primary node) the consensus process and finishes when it has received all the necessary commit messages (2, plus its own). 
No packet loss was observed.

\begin{figure}
    \centering
    \includegraphics[width=1\linewidth]{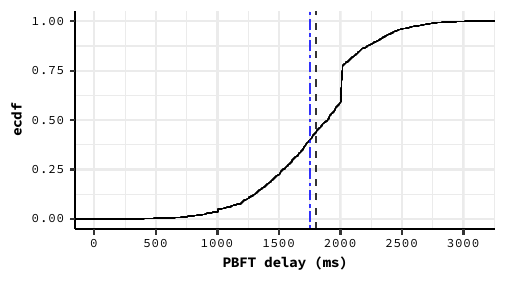}
    \caption{PBFT consensus delay of View SP for $n=4$ vehicles and $N=1000$ PBFT processes (setup A). Vertical lines represent measured average (black) and theoretical expected (blue) values.}
    \label{fig:pbft-delay}
\end{figure}

On average, the total average delay it took for stations to reach a consensus was $\mu=1804.27$ ms, with a standard deviation of $\sigma = 418.22$ ms. The expected theoretical delay is $E(\tau_p) = 1754.11$ ms, a bit smaller than the measured delay due to assumption \circled{1}.

As expected, given the average delay of the UP, the total process average delay is too large for safety-critical applications in vehicular scenarios. Even if the available underlying traffic is tripled (e.g., use also MCMs and CPMs for consensus) and the expected average consensus delay reach $\approx 600$ ms, this value would still be considered too large for these type of applications.

For the performance of the ITS-G5 channel, setup B was similar to setup A, although with a varying PBFT trigger period ranging [500-30000] ms. 
The observed packet size ranged between [92-553] bytes for the normal traffic, and [92-619] bytes for the traffic with the View SP.
The sniffer measured the CBR over a timeframe of $2500$ seconds for each trigger period value for simulations with the View SP traffic. For the normal traffic, the CBR was measured just once, over a timeframe of also $2500$ seconds.
The radio channel performance results are shown in Figure \ref{fig:pbft-cbr}. 

Results show a lower average VEE overhead for larger PBFT trigger periods as the system handles a less amount of PBFT processes. The overhead remains fairly constant for PBFT trigger period below the previously average delay of $\mu=1804.27$ ms, starting to decrease around this threshold. Below this value, the vast majority of exchanged CAMs are extended with the VEE as more PBFT processes, on average, are created before they are finalized. 

The VEE overhead varied from $+3.73\%$ to $+0.45\%$, demonstrating the lightweightness of the proposed design. 
Considering the previously measured delays, but the low overhead of VEE in the communications channel, the proposed design offers a viable approach for road users to establish consensus on non-safety-critical vehicular applications.

\begin{figure}
    \centering
    \includegraphics[width=1\linewidth]{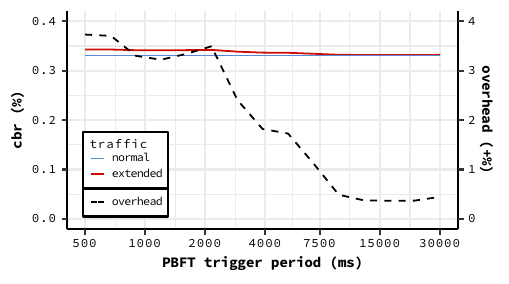}
    \caption{Channel load of $n=4$ vehicles running the PBFT algorithm. All vehicles participate in every process. The process is triggered periodically ($x$-axis). Both normal (non-extended) and extended (View SP) traffic is shown. The View SP overhead scale follows the right $y$-axis. Each measured value corresponds to an observation time of 2500 seconds.}
    \label{fig:pbft-cbr}
\end{figure}


\begin{figure}
    \centering
    \includegraphics[width=1\linewidth]{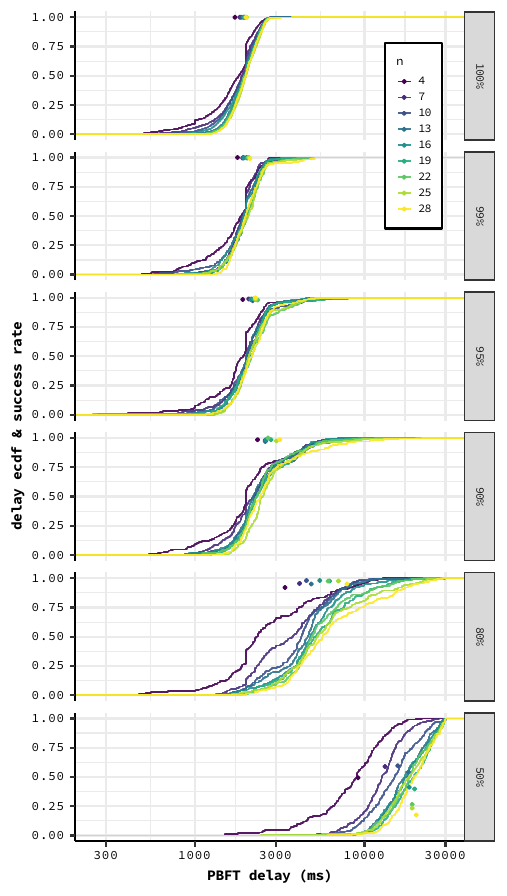}
    \caption{PBFT consensus delay of View SP for $n$ vehicles (setup C). Measured delays for different PDRs (\%) are presented. The ratio of successful processes is also shown (points, plotted over the average delays). A logarithmic scale for the delay is used.}
    \label{fig:pbft-s-delay}
\end{figure}
Regarding the scalability analysis, Figure \ref{fig:pbft-s-delay} provides the measured PBFT delays and successful rate for different numbers $n$ of participating stations and different PDRs (setup C). For each combination (each plotted eCDF) of $n$ vs. PDR, the delay of $N=200$ successful processes was measured. The successful rate is calculated using number of successful processes over the number of total processes. This value was also measured for $N=200$ total processes.
Results show that, as $n$ increases, so does the average delay due to the need for more messages to be exchanged. A higher number of stations in the system also increases the overall network traffic, lowering the computational performance of each station, increasing the time to generate and analyze the ITS messages and therefore increasing the overall PBFT delay. Regarding the PDR, PBFT performance remains stable for high enough PDRs ($\geq 90\%$) for different $n$ values, with performance degrading noticeably as the PDR decreases. About the success rate, it is reasonably high for most PDRs, 100\% for PDR = 100\%, and $\geq 87.5\%$ for PDR $\geq 80\%$. However for PDR = 50\%, the success rate is considerably lower, specially when considering larger $n$ values, where most processes fail. 

Without using VEP/VEE and the passive operation mode, the straightforward use of PBFT or higher complexity consensus algorithms may also not be feasible in safety-critical operations \cite{mxm-pbft}, especially when considering non-negligible packet loss scenarios. 


\subsubsection{Tolling SP}

The measured delays of the Tolling SP are shown in Figure \ref{fig:tolling-delay}. 
Three delays were measured: the total tolling message exchange delay (``total'' in Fig. \ref{fig:tolling-delay}), the OBU VEE overhead (``vee (obu)''), and the RSU VEE overhead (``vee (rsu)''). 
The delay of the TUM-TUMack exchange, without the VEE overheads is also provided (``tum-tumack'').
The total tolling message exchange delay was measured starting when the OBU detects it crosses the toll zone and generates a VEE-extended TUM until it receives the VEE-extended TUMack message issued by the RSU. The OBU VEE overhead is the time for the TUM to be extended with the VEE, where a transaction is issued (256-bit ECDSA signature). The RSU VEE overhead is the time for the TUMack to be extended with the VEE, where the transaction is verified (256-bit ECDSA signature verification). Both of these overheads contribute to the total tolling message exchange delay.

The average tolling message exchange delay was $\mu = 87.82$ ms with a standard deviation of $\sigma = 3.07$ ms. The OBU's VEE overhead was $\mu = 7.04$ ms with $\sigma = 0.79$ ms, and the RSU's VEE overhead was $\mu = 12.73$ ms with $\sigma = 1.26$ ms. 

\begin{figure}
    \centering
    \includegraphics[width=1\linewidth]{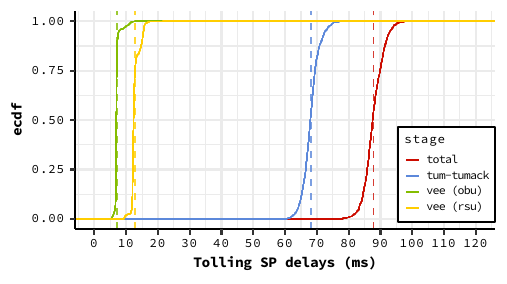}
    \caption{Tolling SP measured delays. Vertical lines represent average values.}
    \label{fig:tolling-delay}
\end{figure}

Ultimately, VEE introduced an average overhead of $\mu=19.77$ ms, equivalent to $+29\%$, to the TUM-TUMack exchange. This increase is the largest of the analyzed SPs mainly due to the additional computational power required for the cryptographic operations associated with issuing and verifying the VEE Token container signature.


To analyze ITS-G5 channel performance under heavier traffic, one RSU and three OBUs were used in setup B. The OBUs constantly issue TUMs at a rate of [0.05-6.67] Hz, regardless if they are crossing the tolling zone or not. This is equivalent to the RSU handling [0.15-20] toll requests per second.
The observed packet size ranged between [92-553] bytes for the normal traffic, and [92-756] bytes for the traffic with the Tolling SP.
The sniffer measured the CBR over a timeframe of $2500$ seconds for each TUM rate value and for each traffic type.
The measured CBRs for both normal and VEE-extended scenarios are shown on Figure \ref{fig:toll-cbr}.
Results show an higher average VEE overhead for larger request rates as the extended TUMs/TUMacks dominate the channel versus the exchanged CAMs and SAEMs. The VEE overhead varied from $+1.27\%$ to $+24.63\%$. This overhead is the largest of the analyzed SPs due to the larger VEE Token container with its added security (additional signature and associated certificate), increasing the TUM/TUMack packet size by $\approx 200$ bytes.

\begin{figure}
    \centering
    \includegraphics[width=1\linewidth]{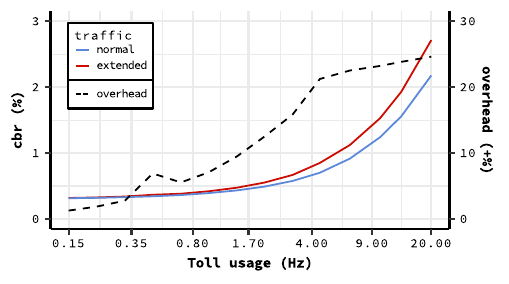}
    \caption{Channel load of one RSU and $n=3$ vehicles performing the toll usage protocol. Both normal (non-extended) and extended (Tolling SP) traffic is shown. The $x$-axis represents the number of toll requests (TUMs) the RSU receives per second in a logarithmic scale. The Tolling SP overhead scale follows the right $y$-axis. Each measured value corresponds to an observation time of 2500 seconds.}
    \label{fig:toll-cbr}
\end{figure}


\section{Conclusions and Future Work}\label{sec:conclusion}

In this paper we presented VEE, a generalized extension for V2X messages with the purpose of increasing data security, consensus, and trading in vehicular networks. 
VEE was based on the concept of applicational extensions, a new mechanism for vehicular communications which can add or enhance features of vehicular networks, utilizing the existing V2X traffic as a transport medium. The mechanism of how applicational extensions could work in a typical protocol stack, focusing on the ETSI ITS protocol stack, was also explained. The VEE extension is built using independent entities named modules which provide the features of data security, consensus, and trading to V2X messages. Sub-protocols, each associated with a type of vehicular event, were also provided, acting here as examples of how VEE can be used for various vehicular scenarios.
The lightweightness of the proposed design, namely its impact on the computational performance and channel load, was validated using hardware-in-the-loop simulations.

Variations of the proposed system can be explored. The Ledger module employed localchains of the VERCO architecture due to their lightweightness and non-reliance on the infrastructure/RSUs, though eventually other types of blockchains/DLTs suitable for the opportunistic and mobile nature of vehicular networks could be employed. Regarding data security, other non blockchain/DLT based mechanisms could also be investigated.

A more thorough analysis of the Consensus module can be performed. We only tested its performance using the PBFT consensus protocol using the CAMs' UP. The use of other consensus protocols, fault tolerant or not, and a combination of several UPs can be analyzed to decrease the total delay of the consensus processes. Several membership selection algorithms can also be implemented.

The Token module can also be reformulated for better performance. It was the module with heaviest impact in both communication delay and channel performance in the analyzed setups, mainly due to the need for the additional security-features required by the employed cryptocurrency trading mechanism. A way to increase its performance would be to merge the cryptocurrency identity into the vehicular identity, approximately halving the cryptographic operations required. However this could raise some privacy issues, yet to be analyzed.

Ultimately, a multitude of modules or even new types of applicational extensions can be explored for a multitude of vehicular use-cases. 
As demonstrated here, these types of extensions can seamlessly add and enhance features of the vehicular network, while also minimally impacting the vehicular network.

\section*{Acknowledgment}

This work is supported by the European Union / Next Generation EU, through Programa de Recuperação e Resiliência (PRR) [Project Route 25 with Nr. C645463824-00000063].

\bibliographystyle{ieeetr}
\bibliography{refs}

\end{document}